%
%
%
%
%
%
%
\documentclass[11pt,a4paper,usanames,dvipsnames,reqno]{amsart}

\usepackage{ifthen}

\ifthenelse{\isundefined \OlafsVersion}
{ 
 \newcommand{\noTIKZ}[1]{#1} 
}
{ 
 \newcommand{\noTIKZ}[1]{} 
}
\usepackage[margin=0.8in]{geometry}

\usepackage{scrtime} 
\usepackage{lmodern} 
\usepackage{amsthm} 
\usepackage{amsmath}
\usepackage{amsfonts}
\usepackage{amssymb}
\noTIKZ{\usepackage{tikz-cd}} 
\usepackage{graphicx}
\usepackage{caption} 
\usepackage{subcaption}
\captionsetup*[subfigure]{position=bottom} 
\usepackage{float}
\usepackage{cleveref} 
\noTIKZ{\usetikzlibrary{arrows}}
\usepackage{mathtools} 
\usepackage{dsfont}
\usepackage{color}
\noTIKZ{\usepackage{xfrac}}
\usepackage{enumitem}

\newcommand{\G}{\mathbf G} 
\newcommand{\Gt}{\widetilde{\G}}  
\newcommand{\subG}{\mathbf H} 
\newcommand{\W}{\mathbf W} 
\newcommand{\Wt}{\widetilde{\W}}  
\newcommand{\m}{\widetilde{m}}  
\newcommand{\EG}{\mathbf G^-} 
\newcommand{\EM}{\mathbf W^-}
\newcommand{\Em}{m^-} 
\newcommand{\EA}{\alpha^-}
\newcommand\VG{\mathbf G^+} 
\newcommand{\VM}{\mathbf W^+}
\newcommand{\Vm}{m^+} 
\newcommand{\VA}{\alpha^+}

\newcommand{\D}{D} \newcommand{\SG}{\mathcal{S}}
\newcommand{\MSG}{\mathcal{MS}} 
 
\DeclareMathOperator{\Tr}{Tr} \DeclareMathOperator{\ind}{ind} 
\newcommand{\eqLapl}{{}{\Delta^{\Wt}_{\tb}(\chi)}{}}
\newcommand{\tV}{\widetilde{V}}
\newcommand{\tE}{\widetilde{E}}
\newcommand{\tb}{\widetilde{\beta}}

\newcommand{\R}{\mathbb{R}} 
\newcommand{\C}{\mathbb{C}} 
\newcommand{\Z}{\mathbb{Z}} 
\newcommand{\Torus}{\mathbb{T}}
\newcommand{\UVP}{\bigcup_{\alpha\in\mathcal{A}(\G)}}



\newtheorem{theorem}{Theorem}[section]
\newtheorem*{theorem*}{Theorem}
\newtheorem{proposition}[theorem]{Proposition}

\theoremstyle{definition} 
\newtheorem{definition}[theorem]{Definition}

\newtheorem{remark}[theorem]{Remark}

\theoremstyle{plain}

{\bf}{\it}
\theoremstyle{definition}

\numberwithin{equation}{section}
\captionsetup[subfigure]{labelfont=rm}

\usepackage{accents} 
\usepackage{bbm}
\newcommand{\myparagraph}[1]{\noindent\textbf{#1}}

\newcommand{\Betti}{b} 


\newcommand{\bd} {\partial} 
\DeclareMathOperator*{\dcup} {\mathaccent\cdot\cup} 

\newcommand{\e}{\mathrm e} 
\newcommand{\im}{\mathrm i} 

\newcommand{\dd} {\, \mathrm d} 

\newcommand{\orient}[1]{\accentset{\curvearrowright}{#1}}

 

\newcommand{\quadtext}[1]{\quad\text{#1}\quad}
\newcommand{\qquadtext}[1]{\qquad\text{#1}\qquad}




  \newcounter{lookcounter} \setcounter{lookcounter}{0}

\title{Covering graphs, magnetic spectral gaps and applications to polymers and nanoribbons}
\author{John Stewart Fabila-Carrasco} %
\address{Department of Mathematics, University Carlos III de Madrid,
  Avda. de la Universidad 30, 28911. Leganés (Madrid), Spain.}
\email{jfabila@math.uc3m.es}

\author{Fernando Lled\'o} %
\address{Department of Mathematics, University Carlos III de Madrid,
  Avda. de la Universidad 30, 28911. Leganés (Madrid), Spain and Instituto de Ciencias Matemáticas (CSIC-UAM-UC3M-UCM), Madrid.}
\email{flledo@math.uc3m.es}

\thanks{JSFC was supported by Spanish Ministry of Economy and
Competitiveness through project MTM2017-84098-P}
\thanks{FLl was supported by Spanish Ministry of 
Education through project DGI MTM2017-84098-P and the
\emph{Severo Ochoa} Program for Centers of Excellence in R\&D
(SEV-2015-0554).}


\begin{document}
\begin{abstract}
In this article, we analyze the spectrum of discrete magnetic Laplacians (DML) on an infinite
covering graph $\Gt \rightarrow \G=\Gt /\Gamma$ with (Abelian) lattice group $\Gamma$ and
periodic magnetic potential $\widetilde{\beta}$. We give sufficient conditions for the existence of spectral gaps in the spectrum of the DML
and study how these depend on $\widetilde{\beta}$. The magnetic potential may be interpreted as a control parameter for the spectral bands and gaps. 
We apply these results to describe the spectral band/gap structure of polymers (polyacetylene) 
and of nanoribbons in the presence of a constant magnetic field.
\end{abstract}

\subjclass[2010]{05C50, 05C63, 47B39, 47A10}

\keywords{Discrete magnetic Laplacian, covering graphs, spectral gaps, polymeres, nanoribbons, magnetic field}

\maketitle

%
%
\section{Introduction}
\label{sec:intro}
%
    
It is a well known fact that the spectrum of Laplacians or, more generally, Schr\"odinger operators with
periodic potentials, on Abelian coverings have band structure. That is to say the spectrum consists of the union 
of intervals (bands) described in terms of a so-called Floquet (or Bloch) parameter which is the dual 
of the Abelian group acting on the structure. If two consecutive spectral bands of a bounded self-adjoint operator 
$T$ do not overlap, then we say that the spectrum has a spectral gap, i.e., a maximal nonempty interval $(a,b)\subset[-\|T\|,\|T\|]$ 
that does not  intersect the spectrum of the operator. This is a quite natural situation in solid state physics, where -- for
example in semiconductors or its optical counterparts, photonic crystals -- the operators modeling the 
dynamics of particles have some forbidden energy regions (see, e.g., \cite{kuc:01,kk:02}).
Depending on the type of the periodic structure involved, spectral gaps may be produced by deformation of the geometry
(cf., \cite{post:03a,lledo-post:07,lledo-post:08}) or by a suitable periodic decoration of the metric or the discrete covering graph 
(see, e.g., \cite{aizenman-schenker:00,ekw:10,koro-sabu:15,lledo-post:08b,suzuki:13} and \cite[Section~4]{kuc:05}).

In this article we study the spectrum of discrete magnetic Laplacians (DMLs for short) on infinite discrete coverings graphs 
\[
 \pi\colon \Gt \rightarrow \G=\Gt /\Gamma  \;,
\]
where $\Gamma$ is an (Abelian) lattice group acting freely and transitively on $\Gt$ (also the graph $\Gt$ is called as $\Gamma$-periodic
graph with finite quotient $\G$). We will present our analysis for graphs with arbitrary weights $m$ on vertices and arcs although the graphs presented in the examples of the last section will initially have standard weights which are more usual in the context of mathematical physics.
In addition, we consider a periodic magnetic potential $\tb$ on the arcs of the covering graph $\Gt$ 
modeling a magnetic field acting on the graph. 

We denote a weighted graph as $\W=(\G,m)$, and a magnetic
weighted graph (MW-graph for short) is a weighted graph $\W$ together with a magnetic potential acting on its arcs.
Any MW-graph $\W=(\G,m)$ with magnetic potential $\beta$ has canonically associated a DML denoted as $\Delta^{\W}_{\beta}$.
We say that $\Wt=(\Gt,\m)$ with magnetic potential $\tb$ is a $\Gamma$-periodic MW-graph if $\Gt \rightarrow \G=\Gt /\Gamma$ is a
$\Gamma$-covering and $\m$ and $\tb$ are periodic with respect to the group action.

In this article we generalize the geometric condition obtained in \cite[Theorem~4.4]{fabila-lledo-post:18}
for $\tb=0$ to non-trivial periodic magnetic potentials. In particular, if $\Wt=(\Gt,\m)$ is a $\Gamma$-periodic MW-graph with magnetic potential $\tb$, 
we will give in Theorem~\ref{teo:delta} a simple geometric condition on 
the quotient graph $\G=\Gt /\Gamma $ that guarantees the existence of non-trivial spectral gaps on the spectrum of the discrete magnetic Laplacian $\Delta_{\tb}^{\Wt}$.
To show the existence of spectral gaps, we develop a purely discrete spectral localization technique based on virtualization of arcs and vertices on
quotient $\G$. These operations produce new graphs with, in general, different weights that allow to localize the eigenvalues of the original 
Laplacian in certain intervals. We call this procedure a discrete bracketing and we refer to \cite{fabila-lledo-post:18} for additional motivation and proofs.

One of the new aspects of the present article is the generalization of results in \cite{fabila-lledo-post:18} to include
a periodic magnetic field $\tb$ on the covering graph $ \pi\colon \Gt \rightarrow \G=\Gt /\Gamma$. In this
sense, $\tb$ may be used as a control parameter for the system that serves to modify the size and the regions where 
the spectral gaps are localized. We apply our techniques to the graphs modeling the polyacetylene polymer as well as to 
graphene nanoribbons. The nanoribbons are $\Z$-periodic strips of graphene either with armchair or zig-zag boundaries. The graphic in Fig.~\ref{fig:intro}
corresponds to an armchair nanoribbon with width $3$. It can been seen 
how a periodic magnetic potential with constant value $\tb\in [0,2\pi)$ on each cycle (and plotted on the horizontal axis) affects the spectral bands 
(gray vertical intervals that appear as intersection of the region with a line $\tb=const$) and the spectral gaps (white vertical intervals).
We refer to Subsection~\ref{subsec:nanoribbon} for additional details of the construction.

\begin{figure}[H]
\includegraphics[scale=.3]{./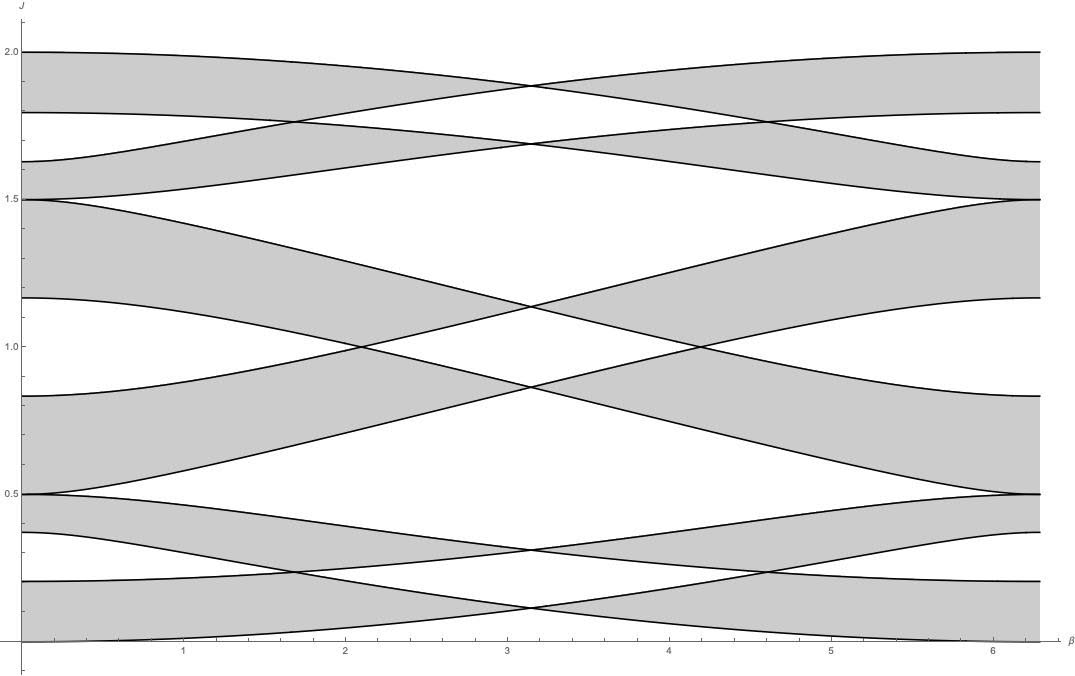}
\caption{Structure of the spectrum in spectral bands (gray) and spectral gaps (white) as a function of the constant (periodic) magnetic potential $\tb$  of \emph{3-aGNR}. }
\end{figure}\label{fig:intro}

In the case of the polyacetylene polymer we find a spectral gap that is stable under perturbation of the (constant) magnetic field. 
Moreover, if the value of the magnetic field is $\pi$ then the spectrum of the DML degenerates to four eigenvalues of infinite multiplicity
making the material almost an insulator.
 
The article is structured in five sections. In Section~\ref{sec:graphs+DML} we collect the basic definitions and results on discrete weighted multigraphs (graphs which may have loops and multiple arcs). We consider discrete magnetic potentials on the arcs and define the 
discrete magnetic Laplacian on the graph which will be the central operator in this work. In Section~\ref{sec:spectral_order} we present a spectral relation between finite MW-graphs based on an order relation between the eigenvalues of the corresponding DMLs.
Moreover, we will present the basic arc and vertex virtualization procedure that will allow one to localize the spectrum of the DML on the infinite covering graph. In Section~\ref{sec:coverings} we extend the discrete 
Floquet theory considered in \cite[Section~5]{fabila-lledo-post:18} to the case of covering graphs with periodic magnetic potentials. In Section~\ref{sec:examples} 
we apply the spectral localization results developed before to the example of $\Z$-periodic graphs modeling the polyacetylene polymer as well as 
graphene nanoribbons in the presence of a constant magnetic field.

\paragraph{\bf Acknowledgements:} It is a pleasure to thank Manuel Asorey for useful conversations
on the graphene nanoribbons example of the Section~\ref{sec:examples},
during the IWIGQMA workshop in Grajera 2019, Spain.

%
%
\section{Weighted graphs and discrete magnetic Laplacians}
\label{sec:graphs+DML}
%

In this section we introduce the basic definitions and results concerning MW-graphs and give also the definition of discrete magnetic Laplacians. 
For further motivation and results we refer to \cite{sunada:94c,lledo-post:08b,fabila-lledo-post:18} and references cited therein.

We denote by $\G=\left( V, E, \bd \right)$ a (discrete) directed multigraph which in the following 
we call simply a graph; here $V=V(\G)$ is
the set of vertices and $E=E(\G)$ the set of arcs. The orientation map is given by $\bd \colon
E\rightarrow V\times V$ and $\bd
e=\left(\bd_-e,\bd_+e \right)$ is the pair of the initial and
terminal vertices. Graphs are allowed to have multiple arcs, i.e., arcs
$e_1\neq e_2$ with $\left(\bd_-e_1,\bd_+e_1 \right) =\left(
  \bd_-e_2,\bd_+e_2 \right)$ or $\left(\bd_-e_1,\bd_+e_1 \right)
=\left( \bd_+e_2,\bd_-e_2 \right)$ as well as loops, i.e., arcs $e_1$
with $\bd_-{e_1}=\bd_+{e_1}$. Moreover, we define
\begin{equation*}
  E_v:=E_v^+ \dcup E_v^- \quadtext{(disjoint union), where}
  E^\pm_v :=\left\lbrace e\in E \mid v=\bd_\pm e\right\rbrace  \;.
\end{equation*}
With this notation the degree of a vertex is $\deg(v)=|E_v|$ and a loop increases the degree by $2$.

Given subsets $A,B \subset V$, we define 
\begin{equation*}
  E^+(A,B) := \left\lbrace e \in E \mid \bd_-e \in A, \bd_+e \in B\right\rbrace
  \quadtext{and}
  E^-(A,B) := E^+(B,A).
\end{equation*}
Moreover, we put
$E(A,B) := E^+(A,B) \cup E^-(A,B)$ and $E(A):= E(A,A)$.

To simplify the notation, we write $E(v,w)$ instead of
$E(\{v\},\{w\})$ etc.  Note that loops are not counted double in
$E(A,B)$, in particular, $E(v):=E(v,v)$ is the set of loops based at
the vertex $v\in V$.
The \emph{Betti number} $\Betti(\G)$ of a finite graph $\G=\left( V, E,
  \bd \right)$ is defined as
\begin{equation}
  \label{eq:betti}
  \Betti(\G):=|E|-|V|+1.
\end{equation}

To study the virtualization processes of
vertices, arcs and the structure of covering graphs we will need to introduce the following
substructures of a graph.

\begin{definition}
  \label{def:subgraphs}
  Let $\G=(V,E,\bd)$ be a graph and denote by
  $\subG=(V_0,E_0,\bd_0)$ a triple such that $V_0 \subset V$, $E_0
  \subset E$ and $\bd_0 =\bd \restriction_{E_0}$.
  \begin{enumerate}
  \item 
    \label{subgraphs.a}
    If $E_0 \cap E(V \setminus V_0) = \emptyset$, we say that $\subG$
    is a \emph{partial subgraph} in $\G$. We call
    \begin{align}
      \nonumber
      B(\subG,\G) 
      :=& E(V_0,V \setminus V_0)\\
      \label{eq:def.bridges}
      =&\left\lbrace e \in E 
        \mid \bd_-e \in V_0, \bd_+e \in V \setminus V_0 \text{ or }
             \bd_+e \in V_0, \bd_-e \in V \setminus V_0
      \right\rbrace
    \end{align}
    the set of \emph{connecting arcs} of the partial subgraph $\subG$
    in $\G$.
	
  \item 
      \label{subgraphs.d}
       If $E_0 \subset E(V_0)$, then $\subG$ is a \emph{subgraph} of $\G$
  \end{enumerate}
\end{definition}

Note that, in general, a partial subgraph $\subG=(V_0,E_0,\bd_0)$ is not a graph as defined above, since we may have
arcs $e \in E$ with $\bd_\pm e \notin V_0$. We do exclude though the case that $\bd_+e \notin V_0$ \emph{and} 
$\bd_-e \notin V_0$.  The arcs not mapped into $V_0 \times V_0$ under $\bd_0$ are precisely the connecting arcs of 
$\subG$ in $\G$. Partial subgraphs appear naturally as fundamental domains of covering graphs 
(cf., Section~\ref{sec:coverings}).\footnote{Note that we use the name partial subgraph in a different sense as in usual combinatorics literature.}

Let $\G=(V,E,\bd)$ be a graph; a \emph{weight} on $\G$ is a
pair of functions denoted by a unique symbol $m$ on the vertices and arcs
$m\colon V\rightarrow (0,\infty)$ and $m\colon E\rightarrow (0,\infty)$
such that $m(v)$ is the weight at the vertex $v$ and $m_e$ is the weight at $e\in E$.
We call $\W=(\G,m)$ a weighted graph.  It is natural to interpret $m$ as a positive measure
and consider $m(E_0):=\sum_{e \in E_0} m_e$ for any $E_0 \subset E$.  The
\emph{relative weight} is $\rho \colon V\rightarrow (0,\infty)$
defined as
\begin{subequations}
  \label{eq:rel.weight}
  \begin{equation}
    \label{eq:rel.weight.a}
    \rho(v)
    :=\dfrac{m(E_v)}{m(v)}
    =\dfrac{m(E_v^+)+m(E_v^-)}{m(v)}.
  \end{equation}
  In order to work with bounded discrete magnetic Laplacians 
  we will assume that the relative weight is uniformly bounded, i.e.,
  \begin{equation}
    \label{eq:rel.weight.b}
    \rho_\infty:=\sup_{v\in V} \rho(v)<\infty.
  \end{equation}
\end{subequations}

The most important and intrinsic examples of weights are 
\begin{itemize}
 \item {\em Standard weight:} $m(v)=\deg(v)$, $v\in V$, and $m_e=1$, $e\in E$, so that $\rho(v)=\rho_\infty=1$.
 \item {\em Combinatorial weight:} $m(v)= m_e=1$, $v\in V$, $e\in E$ hence $\rho(v)=\deg(v)$ and $\rho_\infty=\sup_{v\in V}\deg(v)$.
\end{itemize}

Giving a weighted graph $\W=(\G,m)$, we associate the following two natural
Hilbert spaces which we interpret as $0$-forms and $1$-forms, respectively.
\begin{align*}
  \ell_2(V,m)
  &:=\Bigl\lbrace f\colon V\rightarrow \C \mid
    \left\| f \right\|_{V,m}^2=\sum_{v \in V} | f(v)|^2 m(v) < \infty  \Bigr\rbrace
  \qquad\text{and}\\
  \ell_2(E,m)
  &:=\Bigl\lbrace \eta\colon E\rightarrow \C \mid
    \left\| \eta \right\|_{E,m}^2=\sum_{e \in E} | \eta_e|^2 m_e < \infty 
      \Bigr\rbrace,
\end{align*}
with corresponding inner products
\begin{equation*}
  \left\langle f,g\right\rangle_{\ell_2(V,m)}
  =\sum_{v \in V} {f(v)}  \overline{g(v)}  m(v)
  \quadtext{and} 
  \left\langle \eta,\zeta \right\rangle_{\ell_2(E,m)}
  =\sum_{e \in E} \eta_e  \overline{\zeta_e}  m_e \;.
\end{equation*}

Let $\G$ a graph; a \emph{magnetic potential} $\alpha$
acting on $\G$ is a $\Torus$-valued function on the arcs as follows, $\alpha\colon
E(\G)\rightarrow \Torus=\R/2\pi\Z.$ We denote the set of all vector 
potentials on $E(\G)$ just by $\mathcal{A}(\G)$.
We say that two magnetic potentials
$\alpha_1$ and $\alpha_2$ are \emph{cohomologous}, and denote this as
$\alpha_1 \sim \alpha_2$, if there is $\varphi\colon V\rightarrow
\Torus$ with
\begin{equation*}
  \alpha_1 = \alpha_2 + d\varphi.
\end{equation*}
Given a $E_0 \subset E(\G)$, we say that a magnetic potential $\alpha$
has support in $E_0$ if $\alpha_e=0$ for all $e\in E(\G)\setminus
E_0$. We call the class of weighted graphs with magnetic potential {\em MW-graphs} for short.

It can be shown that any magnetic potential on a finite graph can be
supported in $\Betti(\G)$ many arcs. 
For example, if $\G$ is a cycle, any magnetic potential is
cohomologous to a magnetic potential supported in only one arc.
Moreover, if $\G$ is a tree any magnetic potential on a tree is
cohomologous to $0$.

The \emph{twisted (discrete) derivative} is the following linear operator mapping
$0$-forms into $1$-forms:
\begin{equation}
  \label{eq:twist.der}
  d_\alpha\colon \ell_2(V,m) \rightarrow \ell_2(E,m) 
  \qquadtext{with}
  \left( d_\alpha f \right)_e
  =\e^{\im \alpha_e/2}f(\bd_+e) - \e^{-\im \alpha_e/2} f(\bd_-e).
\end{equation}  
We present next the following geometrical definition of Laplacian with magnetic field as a generalization of the discrete 
Laplace-Beltrami operator.

\begin{definition}
  Let $\W=(\G,m)$ a weighted graph with $\alpha \colon
  E \rightarrow \Torus$ a vector
  potential. The \emph{discrete magnetic Laplacian (DML for short)}
  $\Delta_\alpha  \colon
  \ell_2(V)\rightarrow \ell_2(V)$ is defined by
  $\Delta_\alpha=d_\alpha^*d_\alpha$, i.e., by
  \begin{equation*}
    \left( \Delta_\alpha f\right) \left(v \right) 
    =\rho(v)f(v)-\dfrac1{m(v)}\sum_{e\in E_v}  {\e^{\im \orient \alpha_e(v)}}f(v_e) m_e,
  \end{equation*}
  where $\orient \alpha_e(v)$ is the oriented evaluation
  and $v_e$ is the vertex opposite to $v$ along the arc $e$, i.e.,
  \begin{equation*}
    \orient \alpha_e(v)
    = 
    \begin{cases}
      -\alpha_e,  & \text{if $v=\bd_-e$,}\\
      \alpha_e,  & \text{if $v=\bd_+e$,}
    \end{cases}
    \quadtext{and}
    v_e
    = 
    \begin{cases}
      \bd_+e,  & \text{if $v=\bd_-e$,}\\
      \bd_-e   & \text{if $v=\bd_+e$.}
    \end{cases}
  \end{equation*}
  If we need to stress the dependence of the operator of the weighted graph
  $\W=(\G,m)$ we will denote the DML as $\Delta_\alpha^\W$.
\end{definition}

From this definition it follows immediately that the DML $\Delta_\alpha$ is a bounded, 
positive and self-adjoint
operator. Its spectrum satisfies $\sigma(\Delta_\alpha)\subset
[0,2\rho_\infty]$ and, in contrast to the usual Laplacian without magnetic
potential, the DML \emph{does} depend on the orientation of the graph.
If $\alpha\sim \alpha'$, then $\Delta_{\alpha}$ and $\Delta_{\alpha'}$
are unitary equivalent; in particular,
$\sigma(\Delta_\alpha)=\sigma(\Delta_{\alpha'})$. Moreover, if $\alpha\sim 0$ then $\Delta_\alpha\cong
\Delta$ where $\Delta$ denotes the usual discrete Laplacian (with vector
potential $0$). For example, if $\W=(\G,m)$ and $\G$ is a tree, then
$\Delta_\alpha^{\W} \cong \Delta^{\W}$ for any magnetic potential $\alpha$.

%
\section{Spectral ordering on finite graphs and magnetic spectral gaps}
\label{sec:spectral_order}
%

In this section we will introduce a spectral ordering relation $\preccurlyeq$ which
is invariant under unitary equivalence of the corresponding operators. Moreover, we will 
introduce two operations on the graphs (virtualization of arcs and vertices)
that will be used later to develop a spectral localization (bracketing)
of DML on finite graphs. This technique will finally be applied
to discuss the existence of spectral gaps for magnetic Laplacians on covering graphs. 
We refer to \cite{fabila-lledo-post:18,koro-sabu:15,lledo-post:08b} for additional motivation and examples. 
For proofs of the results stated in this section see \cite[Sections~3 and 4]{fabila-lledo-post:18}

Let $\W=(\G,m)$ a weighted graph.  Throughout this section, we will
assume that $|V(\G)|=n<\infty$. We denote the spectrum
of the \emph{DML}  by $\sigma(\Delta^\W_\alpha):=\{ \lambda_k(\Delta^\W_\alpha) \mid
k=1,\dots,n\}\subset [0,2\rho_\infty]$, where we will write the eigenvalues in ascending order
and repeated according to their multiplicities, i.e.,
\begin{equation*}
  0 \leq 
  \lambda_1(\Delta^\W_\alpha) 
  \leq \lambda_2(\Delta^\W_\alpha)
  \leq  \cdots \leq \lambda_n(\Delta^\W_\alpha).
\end{equation*}

\begin{definition}\label{def:PO}
  Let $\W^-$ and $\W^+$ be two finite MW-graphs of order $n^-$ and
  $n^+$, respectively, and magnetic potential $\alpha^{\pm}$. 
  Consider the eigenvalues of the DMLs $\Delta^{\W^\pm}_\alpha$ written
  in ascending order and repeated according to their
  multiplicities. 
  
\begin{enumerate}
 \item We say that $\W^-$ is \emph{spectrally smaller} than
  $\W^+$ (denoted by $\W^- \preccurlyeq \W^+$), if
  \begin{equation*}
    n^- \ge n^+
    \qquadtext{and if} 
    \lambda_k(\Delta^{\W^-}_{\alpha^-})\leq \lambda_k (\Delta^{\W^+}_{\alpha^+}) 
    \quadtext{for all}
    1\leq k \leq n^-\;,
  \end{equation*}
  where we put $\lambda_k(\Delta^{\W^+}_{\alpha^+}):=2\rho_\infty$ for $k= n^++1, \dots, n^-$ (the maximal possible eigenvalue).
 \item Consider $\W^\pm$ as above with $\W^- \preccurlyeq \W^+$. We define
  the \emph{associated $k$-th bracketing interval} $J_k=J_k(\W^-,\W^+)$
  by
  \begin{equation}
    \label{eq:brack.int}
    J_k := \bigl[\lambda_k(\Delta^{\W^-}_{\alpha^-}),\lambda_k(\Delta^{\W^+}_{\alpha^+})\bigr]
  \end{equation}
  for $k=1,\dots, n^-$.
\end{enumerate}
  
\end{definition}

Given a MW-graph, we introduce two elementary operations that consist on virtualizing arcs and vertices. 
The first one will lead to a spectrally smaller graph.

\begin{definition}[\emph{virtualizing arcs}]
  \label{deletearcs}
  Let $\W=(\G,m)$ a weighted graph with magnetic potential $\alpha$ and
  $E_0\subset E(\G)$.  We denote by $\EM=(\EG,\Em)$ the weighted
  subgraph with magnetic potential $\EA$ defined as follows:
  \begin{enumerate}
  \item $V(\EG)=V(\G)$ with $\Em(v):=m(v)$ for all $v\in V(\G)$;
  \item $E(\EG)=E(\G)\setminus E_0$ with $\Em_e:=m_e$ and
    $\bd_{\pm}^{\EG} e=\bd_{\pm}^\G e$ for all $e\in E(\EG)$;
  \item $\EA_e=\alpha_e$, $e\in E(\EG)$.
  \end{enumerate} 
  We call $\EM$ the weighted subgraph obtained from $\W$ by
  \emph{virtualizing the arcs $E_0$}. We will sometimes denote the
  weighted graph simply by $\W^-=\W-E_0$ and we write the 
  corresponding discrete magnetic Laplacian as
  $\Delta_{\EA}^{\EM}$.
\end{definition}

The second elementary operation on the graph will lead now to a spectrally larger
graph.

\begin{definition}[\emph{virtualizing vertices}]
  \label{deletevertices}
  Let $\W=(\G,m)$ a weighted graph with magnetic potential $\alpha$ and
  $V_0\subset V(\G)$.  We denote by $\VM=(\VG,\Vm)$ the weighted
  partial subgraph with magnetic potential $\VA$ defined as
  follows:
  \begin{enumerate}
  \item $V(\VG)=V(\G) \setminus V_0$ with $\Vm(v):=m(v)$ for all $v\in V(\VG)$;
  \item $E(\VG)=E(\G) \setminus \bigcup_{v_0 \in V_0}E(v_0)$ with
    $\Vm_e=m_e$ for all $e\in E(\G)$;
  \item $\VA_e=\alpha_e$, $e\in E(\VG)$.  
  \end{enumerate} 
  We call $\VM$ the weighted partial subgraph obtained from $\W$ by
  \emph{virtualizing the vertices $V_0$}. We will denote it
  simply by $\W^+=\W-V_0$.
  The corresponding discrete magnetic Laplacian is defined by 
  \begin{equation*}
    \Delta_{\VA}^{\W^+}=(d_{\VA})^*d_{\VA},
    \qquadtext{where}
    d_{\VA}:= d_\alpha \circ \iota
  \end{equation*}
  with
  \begin{equation*}
    \iota \colon \ell_2(V(\VG),\Vm) \rightarrow \ell_2(V(\G),m),
    \qquad
    (\iota f)(v)=
    \begin{cases}
      f(v), &  v \in V(\VG),\\
      0, & v \in V_0.
    \end{cases}
  \end{equation*}
\end{definition}
It can be shown that the operator $ \Delta_{\VA}^{\W^+}$ is the compression of $\Delta^\W$
onto a $(|V|-|V_0|)$-subspace.

The previous operations of arc and vertex virtualization will be used to 
localize the spectrum of intermediate DMLs. Before summarizing the technique in 
the next theorem, we need to introduce the following notion of vertex
neighborhood of a family of arcs.

\begin{definition}
  \label{def:admissible}
  Let $\G$ a graph and $E_0 \subset E(\G)$.  We say that a vertex
  subset $V_0 \subset V(\G)$ is \emph{in the neighborhood of $E_0$}
  if $E_0 \subset \bigcup_{v \in V_0} E_v$, i.e., if $\bd_+ e \in V_0$
  or $\bd_- e\in V_0$ for all $e\in E_0$.
\end{definition}
Later on $E_0$ will be the set of connecting arcs of a covering graph, and we
will choose $V_0$ to be as small as possible to guarantee the
existence of spectral gaps (this set is in general not unique).

\begin{theorem}
  \label{thm:technique}
  Let $\W=(\G,m)$ be a  finite MW-graph with magnetic potential $\alpha$ and $E_0\subset E(\G)$.  Then,
  for any subset of vertices $V_0$ in a neighborhood of $E_0$ we have
  \begin{equation}
    \W^- \preccurlyeq \W \preccurlyeq \W^+\;,
  \end{equation}
  where $\EM=(\EG,\Em)$ with $\EG=\G-E_0$ and $\VM=(\VG,\Vm)$ with
  $\VG=\G - V_0$.  In particular, we have the spectral localizing
  inclusion
  \begin{equation}
    \label{eq:loc.spec}
    \sigma(\Delta^{\W}_\alpha) 
    \subset  J
    := J(\W^-,\W^+)
    = \bigcup_{k=1}^{|V(\G)|} 
       \bigl[\lambda_k(\Delta^{\W^-}_{\alpha^-}),\lambda_k(\Delta^{\W^+}_{\alpha^+})\bigr];.
  \end{equation}
\end{theorem}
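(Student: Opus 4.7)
The plan is to establish the two comparisons $\W^- \preccurlyeq \W$ and $\W \preccurlyeq \W^+$ separately by applying the min-max principle to the quadratic form $q^{\W}_\alpha(f) := \|d_\alpha f\|^2_{E,m}$ canonically associated with the DML via $\Delta^{\W}_\alpha = d_\alpha^* d_\alpha$. Once both orderings are in hand, the spectral localization \eqref{eq:loc.spec} is immediate: for every index $k$ the eigenvalue $\lambda_k(\Delta^{\W}_\alpha)$ lies in the bracketing interval $J_k$, and taking the union over $k$ yields $\sigma(\Delta^{\W}_\alpha) \subset J$.

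For the lower comparison $\W^- \preccurlyeq \W$, I would observe first that virtualizing arcs leaves the underlying Hilbert space untouched: since $\Em(v) = m(v)$ for all $v \in V(\G)$, both operators act on the same space $\ell_2(V(\G),m)$, and $n^- = |V(\EG)| = n$. Writing $\|d_\alpha f\|^2$ as a sum over $e \in E(\G)$ of the non-negative contributions $|\e^{\im \alpha_e/2} f(\bd_+ e) - \e^{-\im \alpha_e/2} f(\bd_- e)|^2 m_e$ and splitting off the edges in $E_0$ gives
\begin{equation*}
q^{\W}_\alpha(f) \;=\; q^{\EM}_{\EA}(f) + \sum_{e \in E_0} \bigl|\e^{\im \alpha_e/2} f(\bd_+ e) - \e^{-\im \alpha_e/2} f(\bd_- e)\bigr|^2 m_e \;\ge\; q^{\EM}_{\EA}(f)
\end{equation*}
for every $f \in \ell_2(V,m)$. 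The variational characterization of eigenvalues then yields $\lambda_k(\Delta^{\EM}_{\EA}) \le \lambda_k(\Delta^{\W}_\alpha)$ for all $k = 1,\dots,n$.

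For the upper comparison $\W \preccurlyeq \W^+$, I would exploit that the extension-by-zero map $\iota \colon \ell_2(V(\VG),\Vm) \to \ell_2(V(\G),m)$ appearing in Definition~\ref{deletevertices} is an isometric embedding, since $\Vm$ is precisely the restriction of $m$ to $V(\VG) = V(\G) \setminus V_0$. From $d_{\VA} = d_\alpha \circ \iota$ one reads off the identity $\Delta^{\W^+}_{\VA} = \iota^* \Delta^{\W}_\alpha \iota$, exhibiting $\Delta^{\W^+}_{\VA}$ as the compression of $\Delta^{\W}_\alpha$ to a subspace of codimension $|V_0|$. The Cauchy interlacing theorem then delivers $\lambda_k(\Delta^{\W}_\alpha) \le \lambda_k(\Delta^{\W^+}_{\VA})$ for $k = 1,\dots,n^+$; for the remaining indices $k > n^+$ the extending convention $\lambda_k(\Delta^{\W^+}_{\VA}) := 2\rho_\infty$ together with the a priori bound $\sigma(\Delta^{\W}_\alpha) \subset [0,2\rho_\infty]$ makes the inequality automatic.

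Combining these two steps yields $\W^- \preccurlyeq \W \preccurlyeq \W^+$, and the spectral localization \eqref{eq:loc.spec} follows at once. The main technical point to verify carefully is the identification $\Delta^{\W^+}_{\VA} = \iota^* \Delta^{\W}_\alpha \iota$ together with the isometry of $\iota$, since this is what legitimizes the Cauchy interlacing step. Note that the neighborhood hypothesis on $V_0$ plays no role in either of the two individual comparisons; its purpose in the statement is rather to control the ``tightness'' of the localizing union $J$ in the forthcoming applications to covering graphs, where $E_0$ is chosen as the set of connecting arcs of a fundamental domain.
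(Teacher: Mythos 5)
Your argument is correct and follows essentially the route the paper relies on (the proof itself is deferred to \cite{fabila-lledo-post:18}, but the text explicitly flags the two key ingredients you use: the form comparison $q^{\W}_\alpha \ge q^{\W^-}_{\alpha^-}$ on the common space $\ell_2(V,m)$, and the fact that $\Delta^{\W^+}_{\alpha^+}=\iota^*\Delta^{\W}_\alpha\iota$ is a compression, so that min--max and Cauchy interlacing give the two orderings). Your closing observation is also accurate: the neighborhood hypothesis on $V_0$ is not needed for either individual comparison, but only to ensure $E_0\subset\bigcup_{v\in V_0}E_v$ so that $\alpha^\pm$ become independent of the values of $\alpha$ on $E_0$ in the covering-graph application.
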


Observe that, in fact, the bracketing $J=J(\alpha)$ depends on the magnetic potential $\alpha$. In Section~\ref{sec:examples} we show
in some examples how the localization intervals $J_k$ change under the variation of the magnetic potential (see, e.g., Figure~\ref{fig:Prop}).
However, if the magnetic potential $\alpha$ has support on the virtualized arcs $E_0$, then $J$ will not depend on $\alpha$ because
$\alpha^\pm\sim 0$.

Next we make precise several notions in relation to spectral gaps that will
be need when we study covering graphs.
Recall that $\sigma(\Delta^\G_\alpha) \subset [0,2\rho_\infty]$,
where $\rho_\infty$ denotes the supremum of the relative weight,
(cf., Eq.~\eqref{eq:rel.weight}).

\begin{definition}
  \label{def:spec.gaps}
  Let $\W=(\G,m)$ be a weighted graph.
  \begin{enumerate}
  \item The \emph{spectral gaps set} of $\W$ is defined by
    \begin{equation*}
      \SG^\W
      =[0,2\rho_\infty] \setminus \sigma(\Delta^\W)
      =[0,2\rho_\infty]\cap  \rho(\Delta^\W)\;,
    \end{equation*}
    where $ \rho(\Delta^\W)$ denotes the resolvent set of the operator $\Delta^\W$.
  \item The \emph{magnetic spectral gaps set} of $\W$ is defined by
    \begin{equation*}
      \MSG^\W
      =[0,2\rho_\infty] \setminus \UVP \sigma(\Delta^\W_\alpha)
      =\bigcap_{\alpha\in\mathcal{A}(\G)} \rho(\Delta^\W_\alpha) \cap [0,2\rho_\infty].
    \end{equation*}
  \end{enumerate}
  where the union is taken over all the magnetic potential $\alpha$
  acting on $\G$.
\end{definition}


The following elementary properties follow directly from the definition:
$\MSG^\W\subset \SG^\W$.  In particular, if $\SG^\W=\emptyset$,
then $\MSG^\W=\emptyset $ or, equivalently, if $\MSG^\W \neq \emptyset$,
then $\SG^\W \neq \emptyset$.
Moreover, if $\G$ is a tree, then $\MSG^\W = \SG^\W$, as all DMLs are
unitary equivalent with the usual Laplacian $\Delta^\W$.

Up to now we have seen that arc/vertex virtualization will produce 
graphs $\W^\pm$ that allow to localize the spectrum of the DML of any intermediate MW-graph
$\W$ satisfying 
\[
 \W^-\preccurlyeq\W\preccurlyeq \W^+\;.
\]

%

\section{Periodic graphs and spectral gaps}
\label{sec:coverings}

In this section, we will study the spectrum of the DML of an infinite covering graph
with periodic magnetic potential in terms
of its Floquet decomposition. In Proposition~\ref{prp:floq.mag} we will identify the Floquet parameter of the covering
graph with a suitable set of magnetic potentials $\alpha$ on the quotient (cf., Definition~\ref{def:liftp}). This approach generalizes 
results in \cite[Section~5]{fabila-lledo-post:18} to include Laplacians on the infinite
covering graph with a periodic magnetic potential $\widetilde{\beta}$. Finally in Theorem~\ref{theo:main} we state a bracketing technique to localize the spectrum.

%

\subsection{Periodic graphs and fundamental domains}
Let $\Gamma$ be an (Abelian) lattice group and consider the $\Gamma$-covering (or $\Gamma$-periodic) graph
\begin{equation*}
  \pi\colon \Gt \rightarrow \G=\Gt /\Gamma \;.
\end{equation*}
We assume that $\Gamma$ acts freely and transitively on the connected graph $\Gt$ with finite quotient
$\G=\Gt /\Gamma$ (see also \cite[Chapters 5 and 6]{sunada:13} or \cite{sunada:08,fabila-lledo-post:18}).
This action (which we write multiplicatively) is orientation preserving, i.e., $\Gamma$ acts both on $\tV$ and $\tE$ such that
\begin{equation*}
  \bd_+(\gamma e) = \gamma (\bd_+e)
  \quadtext{and}
  \bd_-(\gamma e) = \gamma (\bd_-e)
  \qquad\text{for all $\gamma\in\Gamma$ and $e \in \tE$.}
\end{equation*}
In particular, we have $\tE_{\gamma v} =\gamma \tE_v$, $\gamma\in\Gamma$, $v\in \tV$.

In addition, we will study weighted covering graphs with a periodic weight $\m$ and periodic magnetic potential $\tb$, i.e., 
we consider $\Wt =(\Gt,\m,\tb)$ a MW-graph such that for any $\gamma\in\Gamma$ we have
\begin{equation*}
  \m(\gamma v)=\m(v) \;,\;
  v \in \tV
  \;,\;
  \m_{\gamma e}=\m_e
  \;,\;
  e \in \tE\qquadtext{and} \tb_{\gamma e}=\tb_e\;,\;
  e \in \tE\;.
\end{equation*}
Note that, by definition, the standard or combinatorial weights on a covering graph
satisfy the invariance conditions on the weights.
A $\Gamma$-covering weighted graph $\Wt =(\Gt,\m)$ naturally induces a weight $m$ and
a magnetic potential $\beta$
on the quotient graph $\G=\Gt /\Gamma$, given by $m=\m\circ \pi^{-1}$ and  $\beta=\tb\circ \pi^{-1}$.
    
We define next some useful notions in relation to covering graphs
(see, e.g., \cite[Section~5]{fabila-lledo-post:18} as well as 
\cite[Subsections~1.2 and 1.3]{koro-sabu:14} and \cite{lledo-post:08}).
\begin{definition}
  \label{def:fund.dom}
  Let $\Gt=(\tV,\tE,\widetilde{\bd})$ be a $\Gamma$-covering graph.
  \begin{enumerate}
    \item
      \label{fund.dom.a}
      A \emph{vertex}, respectively \emph{arc} \emph{fundamental domain} on a $\Gamma$-covering graph is
      given by two subsets $\D^V \subset \tV$ and $\D^E\subset \tE$
      satisfying
      \begin{align*}
        \tV&=\bigcup_{\gamma\in\Gamma}\gamma\D^V \quadtext{and}
        \gamma_1 \D^V \cap \gamma_2\D^V=\emptyset
        \quad\text{if $\gamma_1\neq \gamma_2$,}\\
        \tE&=\bigcup_{\gamma\in\Gamma}\gamma\D^E \quadtext{and}
        \gamma_1 \D^E \cap \gamma_2\D^E=\emptyset \quad\text{if
          $\gamma_1\neq \gamma_2$}
      \end{align*}
      with $\D^E \cap E(\tV \setminus \D^V)=\emptyset$ (i.e., an arc in
      $\D^E$ has at least one endpoint in $\D^V$).  We often simply
      write $\D$ for a fundamental domain, where $\D$ stands either
      for $\D^V$ or $\D^E$.
      
    \item
      \label{fund.dom.b}
      A \emph{(graph) fundamental domain} of a covering graph $\Gt$ is
      a partial subgraph (cf., Definition~\ref{def:subgraphs})
      \begin{equation*}
        \subG =(\D^V,\D^E,\bd \restriction_{\D^E}),
      \end{equation*}
      where $\D^V$ and $\D^E$ are vertex and arc fundamental
      domains, respectively. We call
      \begin{equation*}
        B(\subG,\Gt):=E(\D^V,V \setminus \D^V)
      \end{equation*}
      the set of \emph{connecting arcs} of the fundamental domain
      $\subG$ in $\Gt$.
    \end{enumerate}
\end{definition}

\begin{remark}
  \label{rem:coord}
  \indent
  \begin{enumerate}
  \item
  \label{coord.a}
  Fixing a fundamental domain on the covering graph and the group $\Gamma$ will be used to give coordinates 
  (to the arcs and vertices) on
  the covering graph $\Gt$.
 
    In fact, given a specific $\D^V$ in a $\Gamma$-covering graph $\Gt$, we can write any $v\in V(\Gt)$
    uniquely as $v=\xi(v)v_0$ for a unique pair $(\xi(v), v_0) \in
    \Gamma \times \D^V$.  This follows from the fact that the action
    is free and transitive.  We call $\xi(v)$ the
    \emph{$\Gamma$-coordinate of $v$} (with respect to the fundamental
    domain $\D^V$).  Similarly, we can define the coordinates for the
    arcs: any $e\in E(\Gt)$ can be written as $e=\xi(e)e_0$ for a
    unique pair $(\xi(e), e_0) \in \Gamma \times \D^E$.
    In particular, we have
    \begin{equation*}
      \xi(\gamma v)=\gamma \xi(v)
      \qquadtext{and}
       \xi(\gamma e)=\gamma \xi(e), \quadtext{ for all } \gamma\in\Gamma. 	
    \end{equation*}
  \item
    \label{coord.b}
    Once we have chosen a fundamental domain $\subG=(\D^V,\D^E,\bd)$, 
    we can embed $\subG$
    into the quotient $\G=\Gt/\Gamma$ of the covering $\pi \colon
    \Gt \rightarrow \G=\Gt / \Gamma$ by
    \begin{equation*}
      \D^V \rightarrow V(\G)=V/\Gamma, \quad
      v \mapsto [v]
      \qquadtext{and}
      \D^E \rightarrow E(\G)=E/\Gamma, \quad
      e \mapsto [e],
    \end{equation*}
    where $[v]$ and $[e]$ denote the $\Gamma$-orbits of $v$ and $e$,
    respectively.  By definition of a fundamental domain, these maps
    are bijective.  Moreover, if $\bd_\pm e=v$ in $\subG$, then also
    $\bd_\pm ([e]) = [v]$ in $\G$, i.e., the embedding is
    a (partial) graph homomorphism. 
  \end{enumerate}
\end{remark}

\begin{definition}
  \label{def:index}
  Let $\Gt=(\tV,\tE,\bd)$ be a $\Gamma$-covering graph with fundamental
  graph $\subG=(\D^V,\D^E,\bd)$.  We define the \emph{index} of an
  arc $e \in \tE$ as
  \begin{equation*}
    \ind_\subG(e) := \xi(\bd_+e)\left(\xi(\bd_-e)\right)^{-1} \in \Gamma.
  \end{equation*}
\end{definition}

In particular, we have $\ind_\subG \colon \tE \mapsto \Gamma$, and
$\ind_\subG(e) \ne 1_\Gamma$ iff $e \in \bigcup_{\gamma \in \Gamma}
\gamma B(\subG,\Gt)$, i.e., the index is only non-trivial on the
(translates of the) connecting arcs. Moreover, the set of indices and
its inverses generate the group $\Gamma$.

Since the index fulfils $\ind_\subG(\gamma e)=\ind_\subG(e)$ for all $\gamma
\in \Gamma$ by Remark~\ref{rem:coord}~\eqref{coord.a}, we can extend
the definition to the quotient $\G=\Gt/\Gamma$ by setting
$\ind_{\G}([e])=\ind_\subG(e) $ for all $e\in E(\G)$. We denote
also $[B(\subG,\Gt)]:=\{ [e] \mid e\in B(\subG,\Gt)\}$.

\subsection{Discrete Floquet theory}

Let $\Wt=(\tV,\tE,\widetilde{\bd},\m)$ be a weighted $\Gamma$-covering graph and
fundamental domain $\subG=(\D^V,\D^E,\bd)$ with corresponding weights
inherited from $\Wt$. In this context one has the natural Hilbert space
identifications
\begin{equation*}
 \ell_2(\tV,\m)
 \cong \ell_2(\Gamma) \otimes \ell_2(D^V,m)
 \cong \ell_2\left(\Gamma, \ell_2(\D^V,m)\right).
\end{equation*}
Floquet theory uses a partial Fourier transformation on the Abelian group that can be understood as putting coordinates on the periodic structure and allows to decompose the corresponding operators as direct integrals. Concretely, we consider
\begin{equation*}
  F \colon \ell_2(\Gamma)\rightarrow L_2(\widehat \Gamma),
  \qquad
  \left(F \textbf a\right) \left(\chi\right)
  :=\sum_{\gamma\in\Gamma} \overline{\chi(\gamma)} a_\gamma
\end{equation*}
for $\textbf a=\left\lbrace a_\gamma \right\rbrace_{\gamma\in\Gamma}
\in \ell_2(\Gamma)$ and where $\widehat \Gamma$ denotes the character group of $\Gamma$.  
We adapt to the discrete context of graphs with periodic magnetic potential $\tb$ the
main results concerning Floquet theory needed later. See, e.g., 
\cite[Section~3]{lledo-post:07} or \cite{koro-sabu:14} 
for details, additional motivation and references.

For any character $\chi\in\widehat{\Gamma}$ consider the space of \emph{equivariant functions} on vertices and arcs
\begin{eqnarray*}
  \ell_2^\chi(V,m)
  &:=&\left\lbrace g\colon V \rightarrow \C \mid
    g(\gamma v)= \chi(\gamma)g(v) 
    \text{ for all  } v\in V \text{ and } \gamma\in\Gamma\right\rbrace, \\
    \ell_2^\chi(E,m)
  &:=&\left\lbrace \eta\colon E\rightarrow \C \mid
    \eta_{\gamma e}= \chi(\gamma)\eta_e 
    \text{ for all  } e\in E \text{ and } \gamma\in\Gamma\right\rbrace.
\end{eqnarray*}
These spaces have the natural inner product defined on the fundamental domains $D^V$ and $D^E$:
\begin{equation*}
  \left\langle g_1,g_2\right\rangle :=\sum_{v\in \D^V} g_1(v) \overline{g_2(v)}  m(v)
  \quadtext{and} 
  \left\langle \eta_1,\eta_2\right\rangle :=\sum_{e\in \D^E} \eta_{1,e} \overline{\eta_{2,e}}  m_e\;.
\end{equation*}
Note that the definition of the inner
product is independent of the choice of fundamental domain (due to the
equivariance). We extend the standard decomposition to the case of 
the DML with periodic magnetic potential (see,
for example,~\cite{koro-sabu:14} or~\cite{higuchi-shirai:99}).

\begin{proposition}
  \label{prp:floq-th}
  Let $\Wt=(\Gt,\m)$  be a covering weighted graph where $\Gt=(\tV,\tE,\widetilde{\bd})$ and $\tb$ is a periodic 
  magnetic potential . Then there are unitary transformations
  \begin{eqnarray*}
    \Phi\colon\ell_2(\tV)
         &\rightarrow& \int_{\widehat \Gamma }^\oplus \ell_2^\chi(\tV,\widetilde{m}) \dd \chi
                       \quadtext{given by}\left( \Phi f \right)_\chi(v)
                       =\sum_{\gamma\in\Gamma}\overline{\chi({\gamma})} f(\gamma v) \\
   \Phi\colon\ell_2(\tE)
         &\rightarrow& \int_{\widehat \Gamma }^\oplus \ell_2^\chi(\tE,\widetilde{m}) \dd \chi
                       \quadtext{given by}\left( \Phi \eta \right)_\chi(v)
                       =\sum_{\gamma\in\Gamma}\overline{\chi({\gamma})} \eta_{\gamma e}\;,             
  \end{eqnarray*}
  such that
  \begin{equation*}
    \sigma\left(\Delta^{\Wt}_{\tb} \right) = \bigcup_{\chi \in \widehat \Gamma} \sigma \left(\eqLapl \right),
  \end{equation*}
  where equivariant Laplacian (fiber operators) are defined as $\eqLapl :=\Delta^{\Wt}_{\tb} \restriction_{\ell_2^\chi(\tV)}$.
\end{proposition}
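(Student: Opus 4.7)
The plan is to realize $\Phi$ as the composition of three unitaries: first, the coordinate isomorphism $\ell_2(\tV,\m) \cong \ell_2(\Gamma) \otimes \ell_2(\D^V, m)$ arising from the bijection $\tV \leftrightarrow \Gamma \times \D^V$ provided by a fixed fundamental domain (Remark~\ref{rem:coord}\eqref{coord.a}); second, the partial Fourier transform $F \otimes \mathrm{id}$, which sends $\ell_2(\Gamma) \otimes \ell_2(\D^V, m)$ unitarily onto $L_2(\widehat\Gamma) \otimes \ell_2(\D^V, m) \cong L_2(\widehat\Gamma, \ell_2(\D^V,m))$; and third, the identification of $L_2(\widehat\Gamma, \ell_2(\D^V, m))$ with the direct integral $\int^\oplus_{\widehat\Gamma} \ell_2^\chi(\tV,\m) \, \dd \chi$. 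The last identification rests on the observation that any $g \in \ell_2^\chi(\tV,\m)$ is determined by its restriction to $\D^V$ via the rule $g(\gamma v_0) = \chi(\gamma) g(v_0)$, and that the map $g \mapsto g\restriction_{\D^V}$ is isometric for each $\chi$. Unrolling the composition gives exactly the announced formula $(\Phi f)_\chi(v) = \sum_\gamma \overline{\chi(\gamma)} f(\gamma v)$; the argument for arcs is verbatim with $\D^V$ replaced by $\D^E$.

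Next I would show that $\Phi$ intertwines $\Delta^{\Wt}_{\tb}$ with the decomposable operator whose fibers are the $\eqLapl$. The key point is the $\Gamma$-equivariance of the DML, which is forced by the periodicity of $\m$ and $\tb$: using the bijection $e \leftrightarrow \gamma e$ between $\tE_v$ and $\tE_{\gamma v}$, the identities $\rho(\gamma v)=\rho(v)$, $\m(\gamma v) = \m(v)$, $\m_{\gamma e} = \m_e$, $(\gamma v)_{\gamma e} = \gamma\, v_e$, and crucially $\orient\tb_{\gamma e}(\gamma v) = \orient\tb_e(v)$ (which holds because $\Gamma$ preserves the orientation of arcs), a direct computation interchanges the sum over $\gamma \in \Gamma$ with the DML and yields
\begin{equation*}
(\Phi \Delta^{\Wt}_{\tb} f)_\chi(v)
= \rho(v)(\Phi f)_\chi(v) - \frac{1}{\m(v)} \sum_{e \in \tE_v} \e^{\im \orient\tb_e(v)} (\Phi f)_\chi(v_e)\,\m_e
\end{equation*}
for every $\chi \in \widehat\Gamma$ and every $v \in \D^V$. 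Since $(\Phi f)_\chi$ is already $\chi$-equivariant, the identity extends to all of $\tV$ and exhibits the fibre operator as $\eqLapl$ acting on $\ell_2^\chi(\tV,\m)$.

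The spectral conclusion $\sigma(\Delta^{\Wt}_{\tb}) = \bigcup_{\chi \in \widehat\Gamma} \sigma(\eqLapl)$ then follows from the standard direct integral principle: the spectrum of a decomposable self-adjoint operator equals the essential union of the fiber spectra. Because each $\ell_2^\chi(\tV,\m)$ has dimension $|\D^V|<\infty$ and the matrix entries of $\eqLapl$ depend continuously on $\chi$ through the phases $\chi(\ind_\subG(e))$ on the connecting arcs, the map $\chi \mapsto \sigma(\eqLapl)$ is continuous and $\widehat\Gamma$ is compact; hence the essential union coincides with the ordinary union, which is automatically closed.

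I anticipate the principal obstacle to be organisational rather than conceptual. The delicate point is the behaviour of the oriented evaluation $\orient\tb_e(v)$ under the $\Gamma$-action: verifying $\orient\tb_{\gamma e}(\gamma v) = \orient\tb_e(v)$ requires distinguishing the cases $v = \bd_-e$ and $v = \bd_+e$ and invoking that $\Gamma$ acts compatibly with $\bd_\pm$. Once this bookkeeping is settled, the commutation with $\Phi$ goes through, and the new content relative to \cite[Section~5]{fabila-lledo-post:18} is precisely the tracking of these magnetic phases through the Fourier transform.
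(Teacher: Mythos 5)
Your argument is correct and follows the same overall skeleton as the paper (construct $\Phi$, verify an intertwining relation, conclude via the direct-integral principle), but the intertwining step is carried out differently, and the difference is worth noting. You commute $\Phi$ directly past the second-order operator $\Delta^{\Wt}_{\tb}$, which forces you to track the oriented evaluation $\orient \tb_e(v)$ and the identity $\orient \tb_{\gamma e}(\gamma v)=\orient \tb_e(v)$ through a case distinction on $v=\bd_\pm e$ — precisely the bookkeeping you flag as the principal obstacle. The paper sidesteps this entirely by factoring through the first-order operator: it introduces the equivariant twisted derivative $d^\chi_{\tb}$ on the fiber spaces, proves the one-line identity $\Phi d_{\tb} f=\int^\oplus_{\widehat\Gamma} d^\chi_{\tb}(\Phi f)_\chi \dd\chi$ using only $\tb_{\gamma e}=\tb_e$ and $\bd_\pm(\gamma e)=\gamma\bd_\pm e$ (no orientation cases arise, since the formula for $d_{\tb}$ treats $\bd_+e$ and $\bd_-e$ symmetrically), and then obtains the decomposition of the Laplacian for free from $\Delta^{\Wt}_{\tb}=d_{\tb}^*d_{\tb}$ and $\eqLapl=(d^\chi_{\tb})^*d^\chi_{\tb}$. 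Your route buys nothing extra here and costs more computation; on the other hand, two points where you are more careful than the paper are genuinely valuable: the explicit realization of $\Phi$ as a composition of three unitaries (coordinate map, $F\otimes\mathrm{id}$, identification of $L_2(\widehat\Gamma,\ell_2(\D^V,m))$ with the equivariant direct integral), and the justification that the essential union of the fiber spectra coincides with the ordinary union because the fibers are finite-dimensional, depend continuously on $\chi$, and $\widehat\Gamma$ is compact — a step the paper passes over in silence.
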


\begin{proof}
 Consider the twisted derivative $d_{\tb}\colon \ell_2(\tV)\to \ell_2(\tE)$ specified in Eq.~(\ref{eq:twist.der}) and the equivariant
 twisted derivative on the fiber spaces defined by $d_{\tb}^\chi\colon \ell_2^\chi(\tV)\to \ell_2^\chi(\tE)$
 \[
  (d_{\tb}^\chi g)_e := e^{{i\tb_e}/{2}}g(\bd_+ e) - e^{{-i\tb_e}/{2}}g(\bd_-e)\;,\quad g\in\ell_2^\chi(\tV)\;.
 \]
It is straightforward to check that if $g \in \ell_2^\chi(\tV)$, then $d^\chi_{\tb} g \in \ell_2^\chi(\tE)$ and that
$\eqLapl=(d^\chi_{\tb})^* d^\chi_{\tb}$. Moreover, we
will show that the unitary transformations $\Phi$ intertwine these two first order operators, i.e.,
\[
 \Phi d_{\tb}f = \int_{\widehat \Gamma }^\oplus \,d_{\tb}^{\chi}\,(\Phi f)_{\chi}\dd \chi
 \;,\quad f\in\ell_2(\tV)\;.
\]
In fact, this is a consequence of the following computation that uses the invariance of the magnetic potential. 
For any $f\in\ell_2(\tV)$ and $\chi\in\widehat{\Gamma}$
\begin{eqnarray*}
   \left(\Phi\left(d_{\tb}f\right)\right)_{\chi,e} &=&  \sum_{\gamma\in\Gamma} \overline{\chi(\gamma)}\, \left(d_{\tb}f\right)_{\gamma e}
                                                     =\sum_{\gamma\in\Gamma} \overline{\chi(\gamma)}
                                                        \left[ e^{i\tb_{\gamma e}/{2}}f(\partial_+\gamma e)  -e^{-i\tb_{\gamma e}/{2}}f(\partial_-\gamma e)\right]\\
                                                   &=&  \sum_{\gamma\in\Gamma}\overline{\chi(\gamma)}
                                                        \left[ e^{i\tb_{ e}/{2}}f(\gamma\partial_+ e)  -e^{-i\tb_{e}/{2}}f(\gamma\partial_- e)\right]\\
                                                   &=&  \left(d_{\tb}^\chi \left(\Phi f \right)_{\chi}\right)_e \;.
\end{eqnarray*}
This shows that
\[
 \Delta^{\Wt}_{\tb} = \int_{\widehat \Gamma }^\oplus \; \eqLapl \dd \chi
\]
and, hence, $\sigma\left(\Delta^{\Wt}_{\tb} \right) = \mathop{\bigcup}\limits_{\chi \in \widehat \Gamma} \sigma \left(\eqLapl \right)$.
\end{proof}


\subsection{Vector potential as a Floquet parameter}

The following result shows that in the case of Abelian groups $\Gamma$
we can interpret the magnetic potential $\alpha$ on the quotient graph partially
as a Floquet parameter for the covering graph $\Gt\to\G$ (see Remark~\ref{rem:coord}~(b)).  
Moreover, recalling the definition of coordinate giving in Remark~\ref{rem:coord}~(\ref{coord.a}) we can
define the following unitary maps (see also \cite{kos:89} for a similar definition in the context 
of manifolds):
\begin{align*}
  U^V &\colon\ell_2(V,m) \to \ell_2^\chi(\tV,\m),&
  \left(U^Vf\right)( v)
  &= \chi(\xi(v))f([v]),\\
  U^E &\colon\ell_2(E,m) \to \ell_2^\chi(\tE,\m),&
  \left(U^E \eta \right)_{ e}
  &= \chi(\xi(e)) \left( \eta  \right)_{[e]}.
\end{align*}
It is straightforward to check that $U^V$ and $U^E$ are well defined and unitary.

\begin{definition}\label{def:liftp}
Let $\Gamma$-covering graph $\pi \colon \Wt \to \W$ with periodic weights $\m$, periodic
magnetic potential $\tb$ and fundamental domain $\subG$.
We denote by $\alpha$ a magnetic potential acting on the quotient $\G=\Gt/\Gamma$. We say that
$\alpha$ has \textbf{the lifting property} if there exists $\chi\in\widehat{\Gamma}$ such
that:
\begin{equation}\label{eq:liftproperty}
\e^{i\alpha_{[e]}}=\chi\left( \ind_\subG(e)\right) e^{i\tb_{[e]}}
 \quad \text{ for all } e \in E.
\end{equation}
We denote the set of all the magnetic potentials with the lifting property as 
$\mathcal{A}_\subG$.
\end{definition}

\begin{proposition}
  \label{prp:floq.mag}
  \indent
Consider a $\Gamma$-covering graph $\pi \colon \Wt \to \W$ with periodic magnetic potential $\tb$, where
$\Wt=(\Gt,\m)$, $\W=(\G,m)$ and $\subG$ is a fundamental domain. Then
\begin{equation}\label{floq.mag.a}
 \sigma(\Delta_{\tb}^{\Wt})= \bigcup_{\alpha\in \mathcal{A}_\subG} \sigma(\Delta^{\W}_{\alpha})
 \subset [0,2p_\infty] \setminus \MSG^{\W}.
\end{equation}
\end{proposition}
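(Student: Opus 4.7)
The plan is to combine the Floquet decomposition from Proposition~\ref{prp:floq-th} with an explicit unitary identification of each fiber operator $\eqLapl$ with a magnetic Laplacian on the quotient graph $\W$, indexed precisely by the potentials in $\mathcal{A}_\subG$. The lifting identity~\eqref{eq:liftproperty} is the algebraic compatibility condition that makes this identification work.

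After fixing the fundamental domain $\subG$, for each character $\chi\in\widehat\Gamma$ I would use the unitary $U^V\colon \ell_2(V,m)\to \ell_2^\chi(\tV,\m)$ introduced above Definition~\ref{def:liftp} and establish
\begin{equation*}
 \eqLapl\, U^V \;=\; U^V\, \Delta^{\W}_{\alpha_\chi},
\end{equation*}
where $\alpha_\chi\in\mathcal{A}_\subG$ is the (unique) magnetic potential on $\G$ determined by $\chi$ through the lifting property (up to the harmless replacement $\chi\leftrightarrow \chi^{-1}$, which is a bijection on $\widehat\Gamma$ and so does not affect the eventual union).

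The verification reduces to a direct computation of $(\eqLapl\, U^V f)(v)$ at an arbitrary $v\in\tV$, using the DML formula, the periodicity of $\tb$ and $\m$, and the equivariance $(U^V f)(w)=\chi(\xi(w))f([w])$. After factoring out $\chi(\xi(v))$, each $e$-summand carries the extra phase $\chi(\xi(v_e)\xi(v)^{-1})\,\e^{i\orient\tb_e(v)}$. The coordinate identity $\xi(v_e)\xi(v)^{-1}=\ind_\subG(e)$ when $v=\bd_- e$ (and its inverse when $v=\bd_+ e$), combined with the matching sign of the oriented evaluation $\orient\tb_e(v)=\mp \tb_e$, reduces this phase to $\chi(\ind_\subG(e))^{\mp 1}\,\e^{\pm i\tb_{[e]}}$. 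Invoking the lifting property this is exactly $\e^{i\orient{(\alpha_\chi)}_{[e]}([v])}$, and the bijection $\tE_v\to E_{[v]}$, $e\mapsto[e]$, then reassembles the sum into $(U^V\Delta^{\W}_{\alpha_\chi} f)(v)$.

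Since by Definition~\ref{def:liftp} the assignment $\chi\mapsto \alpha_\chi$ is surjective onto $\mathcal{A}_\subG$, Proposition~\ref{prp:floq-th} together with the unitary equivalence above gives
\begin{equation*}
 \sigma(\Delta_{\tb}^{\Wt})
 \;=\;\bigcup_{\chi\in\widehat\Gamma}\sigma(\eqLapl)
 \;=\;\bigcup_{\alpha\in\mathcal{A}_\subG}\sigma(\Delta^{\W}_\alpha).
\end{equation*}
The remaining inclusion is then immediate from Definition~\ref{def:spec.gaps}: since $\mathcal{A}_\subG\subset\mathcal{A}(\G)$ and every $\sigma(\Delta^{\W}_\alpha)$ lies in $[0,2\rho_\infty]$, the union sits inside $[0,2\rho_\infty]\setminus \MSG^{\W}$. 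The main technical obstacle is the bookkeeping in the intertwining step: tracking the $\Gamma$-coordinate $\xi$ at both endpoints of every arc, especially the connecting arcs of $\subG$ where $\xi$ jumps by the index $\ind_\subG(e)$, and checking that the resulting phase discrepancy between $\tb$ on $\Gt$ and $\alpha_\chi$ on $\G$ is absorbed exactly by $\chi(\ind_\subG(e))^{\pm 1}$; the lifting property~\eqref{eq:liftproperty} is tailor-made to close this compatibility.
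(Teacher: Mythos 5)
Your proposal is correct and follows essentially the same route as the paper: the Floquet decomposition of Proposition~\ref{prp:floq-th} combined with a unitary identification of each fiber operator $\eqLapl$ with $\Delta^{\W}_{\alpha_\chi}$ via $U^V$ and the lifting property, with the surjectivity of $\chi\mapsto\alpha_\chi$ onto $\mathcal{A}_\subG$ supplying the reverse inclusion. The only cosmetic difference is that you intertwine the second-order Laplacians directly, whereas the paper intertwines the twisted derivatives ($d^{\chi}_{\tb}U^V=U^E d_\alpha$) and then passes to $\Delta=d^*d$; your remark that the $\chi\leftrightarrow\chi^{-1}$ ambiguity is harmless is accurate.
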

\begin{proof}
  By Proposition~\ref{prp:floq-th}, it is enough to show
    \begin{equation*}
     \bigcup_{\chi \in \widehat \Gamma} 
            \sigma \left(\eqLapl \right)
      = \bigcup_{\alpha\in \mathcal{A}_\subG} \sigma\left(\Delta^{\W}_{\alpha}\right)
    \end{equation*}  
To show the inclusion ``$\subset$'' consider a character $\chi \in \widehat{\Gamma}$ and define a magnetic potential on
    $\G$ as follows
    \begin{equation}
      \label{eq:rel.alpha.chi}
      \e^{\im  \alpha_{[e]}}:=\chi(\ind_\subG(e))\,e^{i\tb_{[e]}}\;,\quad e\in E\;.
    \end{equation}
  Then we have
  \begin{eqnarray*}
    \left(d_{\tb}^\chi (U^Vf)\right)_e
      &=& e^{i\tb_{e}/2} (U^Vf)(\bd_+e)-e^{-i\tb_{e}/2}(U^Vf)(\bd_-e) \\
      &=&  e^{i\tb_{e}/2}\chi(\xi(\bd_+e))f([\bd_+e]) - e^{-i\tb_{e}/2}\chi(\xi(\bd_-e))f([ \bd_-e]).
  \end{eqnarray*}
  On the other hand, we have
  \begin{equation*}
    (U^E d_\alpha f)_e
    = \chi(\xi(e))
      \Bigl(\e^{\im \alpha_{[e]}/2} f([ \bd_+e])-\e^{-\im \alpha_{[e]}/2} f([ \bd_-e])\Bigr).
  \end{equation*}
  Therefore, the intertwining equation $d_{\tb}^\chi U= U^E d_\alpha$ holds if
  \begin{equation*}
    e^{i\tb_{e}/2}\chi(\xi(\bd_+e))=\chi(\xi(e))\e^{\im \alpha_{[e]}/2}
    \quadtext{and}
    e^{-i\tb_{e}/2}\chi(\xi(\bd_-e))=\chi(\xi(e))\e^{-\im \alpha_{[e]}/2}
  \end{equation*}
  or, equivalently, if 
  \begin{equation*}
    \e^{\im \alpha_{[e]}} 
    = \chi(\xi(\bd_+e))\chi(\xi(\bd_-e))^{-1}\,e^{i\tb_{[e]}}
    = \chi(\ind_\subG(e))\,e^{i\tb_{[e]}}\;.
  \end{equation*}
  But this equation is true by definition of the magnetic potential on $\G$ given in Eq.~\eqref{eq:rel.alpha.chi}.
  Finally, since $\eqLapl=(d_{\tb}^\chi)^*d_{\tb}^\chi$ and $\Delta^{\W}_\alpha=d_\alpha^*d_\alpha$ it is clear that these Laplacians
  are unitary equivalent.

  To show the reverse inclusion ``$\supset$'' let $\alpha\in \mathcal{A}_\subG$ and $E_\subG\subset E(\G)$ is such that 
  $\{\ind_\subG(e)\mid [e]\in E_\subG\}$ is a basis of the group $\Gamma$.  Then define
    \begin{equation}
      \label{eq:rel.alpha.chi2}
      \chi(\ind_\subG(e)):=\e^{\im  \alpha_{[e]}}\,e^{-i\tb_{[e]}}\;,\quad e\in E_\subG
    \end{equation}
  and we can extend $\chi$ to all $\Gamma$ multiplicatively, so that $\chi\in \widehat{\Gamma}$. As before, we can show then
     $\sigma \left(\eqLapl \right)=  \sigma\left(\Delta^{\W}_{\alpha}\right)$ and the proof is concluded. 
\end{proof}

\subsection{Spectral localization for the DML on a covering graph}
We apply now the technique stated in Theorem~\ref{thm:technique} to covering graphs.
  
\begin{theorem}\label{theo:main}
Let $\Wt=(\Gt,\m)$ be a $\Gamma$-covering graph and $\tb$ a periodic magnetic potential.
Consider a fundamental domain $\subG=(\D^V,\D^E,\bd)$ and $\W=(\G,m)$ with magnetic potential $\beta$, where $\G=\Gt/\Gamma$.
The functions $m$ and $\beta$ are induced by $\m$ and $\tb$ respectively. Let
\begin{equation*}
  E_0 := [B(\subG,\Gt)]
\end{equation*}
be the image of the connectivity arcs on the quotient and $V_0$ in the
neighborhood of $E_0$. Define by
\begin{equation*}
  \W^- := \W - E_0
  \qquadtext{and}
  \W^+:= \W - V_0.
\end{equation*}
the corresponding arc and vertex virtualized graphs, respectively. Then
\begin{equation*}
  \sigma(\Delta^{\Wt}_{\tb}) 
  \subset  \bigcup_{k=1}^{|V(\G)|}
    \underbrace{[\lambda_k(\Delta^{\W^-}_{\beta^-}),\lambda_k(\Delta^{\W^+}_{\beta^+})]}_{=:J_k}
\end{equation*}
where the eigenvalues of $\sigma(\Delta^{\W^-}_{\beta^-})$  and $\sigma(\Delta^{\W^+}_{\beta^+})$ are written in ascending order and repeated according to their
multiplicities.
\end{theorem}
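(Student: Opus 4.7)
The plan is to combine the Floquet decomposition of Proposition~\ref{prp:floq.mag} with the bracketing technique of Theorem~\ref{thm:technique} applied to the finite quotient $\W$, and then to exploit the fact that both virtualization operations erase the dependence on the Floquet parameter.

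First I would use Proposition~\ref{prp:floq.mag} to rewrite
$$\sigma(\Delta^{\Wt}_{\tb}) = \bigcup_{\alpha\in\mathcal{A}_\subG}\sigma(\Delta^{\W}_{\alpha}).$$
Fixing an arbitrary $\alpha\in\mathcal{A}_\subG$, I would then apply Theorem~\ref{thm:technique} to the finite MW-graph $\W$ with magnetic potential $\alpha$ and to the chosen pair $(E_0,V_0)$. Since $V_0$ is in the neighborhood of $E_0$ by hypothesis, this yields
$$\sigma(\Delta^{\W}_{\alpha})\subset \bigcup_{k=1}^{|V(\G)|}\bigl[\lambda_k(\Delta^{\W^-}_{\alpha^-}),\lambda_k(\Delta^{\W^+}_{\alpha^+})\bigr].$$
If the bracketing intervals on the right were independent of $\alpha$, taking the union over $\alpha\in\mathcal{A}_\subG$ would immediately conclude the proof.

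The only delicate step, and the one requiring genuine input, is precisely this $\alpha$-independence: I must show $\Delta^{\W^-}_{\alpha^-}=\Delta^{\W^-}_{\beta^-}$ and $\Delta^{\W^+}_{\alpha^+}=\Delta^{\W^+}_{\beta^+}$ for every $\alpha\in\mathcal{A}_\subG$. This reduces to a simple geometric observation. By Definition~\ref{def:index}, one has $\ind_\subG(e)=1_\Gamma$ whenever $[e]\notin E_0=[B(\subG,\Gt)]$, so the lifting relation \eqref{eq:liftproperty} forces $\e^{\im\alpha_{[e]}}=\e^{\im\beta_{[e]}}$ on every arc outside $E_0$, regardless of the character $\chi$ chosen. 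For $\W^-=\W-E_0$ only such arcs survive, so $\alpha^-$ and $\beta^-$ induce identical DMLs. For $\W^+=\W-V_0$, the hypothesis that $V_0$ lies in the neighborhood of $E_0$ (Definition~\ref{def:admissible}) guarantees that every arc in $E_0$ has at least one endpoint in $V_0$ and is therefore removed by the vertex virtualization; the surviving arcs again lie outside $E_0$, so $\alpha^+$ and $\beta^+$ produce the same DML.

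Putting the pieces together, for every $\alpha\in\mathcal{A}_\subG$ the bracketing intervals appearing in the Theorem~\ref{thm:technique} inclusion agree with the intervals $J_k$ defined through $\beta$, and taking the union over $\alpha$ produces the desired spectral inclusion. I expect the main obstacle to be isolating precisely the geometric point above: that the arcs on which the lifting can introduce a nontrivial character twist are exactly those removed by either of the two virtualization operations. Beyond this observation, the proof is essentially a direct assembly of Proposition~\ref{prp:floq.mag} and Theorem~\ref{thm:technique}.
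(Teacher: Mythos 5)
Your proposal is correct and follows essentially the same route as the paper's proof: Floquet decomposition via Proposition~\ref{prp:floq.mag}, the finite bracketing of Theorem~\ref{thm:technique} for each $\alpha\in\mathcal{A}_\subG$, and the observation that the lifting relation \eqref{eq:liftproperty} can only twist $\alpha$ away from $\beta$ on arcs with nontrivial index, i.e.\ on $E_0$, which are removed by both virtualizations. You have correctly identified the one genuinely non-routine point (the $\alpha$-independence of the bracketing intervals) and justified it exactly as the paper does.
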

\begin{proof}
  By Proposition~\ref{prp:floq.mag} we have \begin{equation*}
    \sigma(\Delta^{\Wt}_{\tb}) = \bigcup_{\alpha\in \mathcal{A}_\subG}
    \sigma(\Delta^{\W}_{\alpha}).
    \end{equation*} 

Now, by the bracketing technique of Theorem~\ref{thm:technique}, we have for any potential with the lifting property
$\alpha\in \mathcal{A}_\subG$ (cf., Definition~\ref{def:admissible}):
\begin{equation*}
  \lambda_k(\Delta^{\W^-}_{\alpha^-}) 
  \le \lambda_k(\Delta^{\W}_{\alpha}) 
  \le \lambda_k(\Delta^{\W}_{\alpha^+})
  \quadtext{for all}
  k=1,\dots,|V(\G)|\;.
\end{equation*}
Therefore, by Eq.~(\ref{eq:loc.spec})
\begin{equation*}
  \sigma(\Delta^{\Wt}_{\tb}) = \bigcup_{\alpha\in \mathcal{A}_\subG} \sigma(\Delta^{\W}_\alpha) 
\subset   \bigcup_{\alpha\in \mathcal{A}_\subG} \bigcup_{k=1}^{|V(\G)|} 
\bigl[\lambda_k(\Delta^{\W^-}_{\alpha^-}),\lambda_k(\Delta^{\W^+}_{\alpha^+})\bigr]\; ;
\end{equation*}
since $\alpha$ has the lifting property, Eq.~(\ref{eq:liftproperty}) implies that there exists $\chi\in\widehat{\Gamma}$ such
that:
\begin{equation*}
\e^{i\alpha_{[e]}}=\chi\left( \ind_\subG(e)\right) e^{i\tb_{[e]}}
\quad \text{ for all } e \in E.
\end{equation*}
But for all $e\in E\setminus E_0= E\setminus[B(\subG,\Gt)]$ the index is trivial, i.e., $\ind_\subG(e)=1_\Gamma$ (see Remark~\ref{rem:coord}).
Thus by $\Gamma$-periodicity we obtain that $\beta_e=\alpha_e$ for all arcs $e\in E\setminus E_0$.
Since $\alpha$ and $\beta$ are magnetic potentials acting on $\G$, and $\G^-=\G-E_0$ then
then $\alpha^-=\beta^-$.
Similarly, for $\G^+=\G-V_0$ with $V_0$ in the
neighborhood of $E_0$, we have that $\alpha^+=\beta^+$. We obtain finally 
\begin{equation*}
  \sigma(\Delta^{\Wt}_{\tb}) 
  \subset  \bigcup_{\alpha\in \mathcal{A}_\subG}  \bigcup_{k=1}^{|V(\G)|}
    \underbrace{[\lambda_k(\Delta^{\W^-}_{\beta^-}),\lambda_k(\Delta^{\W^+}_{\beta^+})]}_{=:J_k}\;.
\end{equation*}
Note that the last union does not depend anymore of $\alpha$ and this fact concludes the proof.
\end{proof}

Note that the bracketing intervals $J_k$ depend on the fundamental
domain $\subG$.  A good choice is one where the set of
connecting arcs is as small as possible providing high contrast between the interior of the fundamental domain
and its boundary. In this case, we have a
good chance that the localizing intervals $J_k$ do not cover
the full interval $[0,2\rho_\infty]$. This is a discrete geometrical version of a 
``thin--thick'' decomposition as described in~\cite{lledo-post:08b}, where
a fundamental domain of the metric and discrete graph has only a few connections to its complement.

The next theorem gives a simple geometric condition on an MW-graph $\W$ for the existence of gaps in the spectrum of the DML 
on the $\Gamma$-covering graph. 
We will specify which arcs and vertices should be virtualized in $\W$ to guarantee
the existence of spectral gaps. This result generalizes the Theorem~4.4 in \cite{fabila-lledo-post:18}.
\begin{theorem}\label{teo:delta}
Let $\Wt=(\Gt,\m)$ be a $\Gamma$-covering graph with a $\Gamma$-periodic magnetic potential $\tb$. Denote
by $\W=(\G,m)$ the quotient graph with induced magnetic potential $\beta$ and induced weights $m$, respectively.

The spectrum of the DML has spectral gaps, i.e.,  $\sigma(\Delta^{\Wt}_{\tb}) \neq [0,2\rho_\infty]$, if the following condition holds:
there exists a vertex $v_0\in V(\G)$ and a fundamental domain  $\subG$ such that the connecting arcs $[B(\subG,\Gt)]$ contain no loops,
$ [B(\subG,\Gt)]\subset E_{v_0}$ and
\begin{equation}
\label{eq:weight.cond}
\delta
:= \rho(v_0) 
- \sum_{e\in [B(\subG,\Gt)]}\frac{m_e}{m((v_0)_e)} -\frac{m( [B(\subG,\Gt)])}{m(v_0)}-\lambda_1(\Delta_{\beta^-}^{W^-})>0\;,
\end{equation}
where $\rho(v_0)=m(E_{v_0}))/m(v_0)$ is the relative weight at $v_0$ and $\W^-=(\G^-,m^-)$ with
$\G^-=\G-[B(\subG,\Gt)]$.
\end{theorem}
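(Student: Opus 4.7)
The plan is to apply the bracketing Theorem~\ref{theo:main} with the singleton $V_0=\{v_0\}$; this choice is admissible because the hypothesis $E_0 := [B(\subG,\Gt)] \subset E_{v_0}$ places $\{v_0\}$ in the neighborhood of $E_0$. The theorem yields
$$
\sigma(\Delta^{\Wt}_{\tb}) \;\subset\; \bigcup_{k=1}^{|V(\G)|}\bigl[\lambda_k(\Delta^{\W^-}_{\beta^-}),\lambda_k(\Delta^{\W^+}_{\beta^+})\bigr],
$$
where $\W^-=\W-E_0$ and $\W^+=\W-v_0$. Exhibiting a nonempty spectral gap therefore reduces to showing that two consecutive bracketing intervals are disjoint; I aim for the first pair, namely $\lambda_1(\Delta^{\W^+}_{\beta^+}) < \lambda_2(\Delta^{\W^-}_{\beta^-})$.

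Next I would write $\Delta^{\W^-}_{\beta^-}$ as a block operator with respect to the orthogonal decomposition $\ell_2(V(\G),m)=\C\,\delta_{v_0}\oplus\ell_2(V(\W^+),m)$. Because $E_0$ contains no loops, the scalar $(1,1)$-entry equals $\rho^-(v_0)=\rho(v_0)-m(E_0)/m(v_0)$, and a direct matrix-element computation shows that the compression of $\Delta^{\W^-}_{\beta^-}$ onto $\ell_2(V(\W^+),m)$ equals $\Delta^{\W^+}_{\beta^+}+C$ where $C$ is the non-negative diagonal multiplication by $v\mapsto m(E(v,v_0)\setminus E_0)/m(v)$, supported on neighbors of $v_0$ in $\W^-$. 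Cauchy interlacing for this codimension-one compression immediately yields the qualitative chain $\lambda_2(\Delta^{\W^-}_{\beta^-}) \ge \lambda_1(\Delta^{\W^+}_{\beta^+}+C) \ge \lambda_1(\Delta^{\W^+}_{\beta^+})$, so the remaining task is to make this strict with an explicit gap of size at least $\delta$.

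The quantitative step uses the test function $g := \phi_1^- - \phi_1^-(v_0)\,\delta_{v_0}$, viewed as an element of $\ell_2(V(\W^+),m)$, where $\phi_1^-$ is a normalized ground state of $\Delta^{\W^-}_{\beta^-}$. Since $g(v_0)=0$ and $E(\W^+)=E(\W^-)\setminus E^-_{v_0}$ with $E^-_{v_0}:=E_{v_0}\setminus E_0$, the Dirichlet form splits as
$$
\|d_{\beta^+} g\|^2 \;=\; \lambda_1(\Delta^{\W^-}_{\beta^-})\,\|\phi_1^-\|^2 \;-\; \sum_{e\in E^-_{v_0}} |(d_\beta\phi_1^-)_e|^2\,m_e.
$$
The boundary sum is then controlled by a weighted Cauchy--Schwarz estimate, where the eigenvalue equation at $v_0$,
$$
\bigl(\rho^-(v_0)-\lambda_1(\Delta^{\W^-}_{\beta^-})\bigr)\phi_1^-(v_0)\,m(v_0) \;=\; \sum_{e\in E^-_{v_0}} \e^{\im\orient\beta_e(v_0)}\,\phi_1^-((v_0)_e)\,m_e,
$$
is used to transfer the estimate from the arcs of $E^-_{v_0}$ to those of $E_0$ via the identity $E_{v_0}=E_0\sqcup E^-_{v_0}$ together with the weight relations $m(E^-_{v_0})=\rho^-(v_0)\,m(v_0)$ and $m(E_0)=\rho(v_0)\,m(v_0)-m(E^-_{v_0})$. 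Combining the variational principle $\lambda_1(\Delta^{\W^+}_{\beta^+})\le \|d_{\beta^+}g\|^2/\|g\|^2$ with the Cauchy-interlacing lower bound from Step~2 then produces
$$
\lambda_2(\Delta^{\W^-}_{\beta^-}) - \lambda_1(\Delta^{\W^+}_{\beta^+}) \;\ge\; \rho^-(v_0) - \lambda_1(\Delta^{\W^-}_{\beta^-}) - \sum_{e\in E_0}\frac{m_e}{m((v_0)_e)} \;=\; \delta \;>\; 0,
$$
exhibiting a spectral gap of length $\ge\delta$ inside $[0,2\rho_\infty]\setminus\sigma(\Delta^{\Wt}_{\tb})$.

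The main technical obstacle is precisely the weighted Cauchy--Schwarz transfer from the $E^-_{v_0}$-boundary terms (which the eigenvalue equation at $v_0$ produces naturally) to the $E_0$-boundary terms $\sum_{e\in E_0} m_e/m((v_0)_e)$ appearing in the hypothesis, so that the resulting bound matches $\delta$ with the correct constants. The magnetic phases $\e^{\im\orient\beta_e(v_0)}$ enter only as unimodular factors inside absolute values and thus do not alter the scalar estimate, but the complex-valued inner products and the cross terms between the direction $\delta_{v_0}$ and its orthogonal complement require delicate handling throughout.
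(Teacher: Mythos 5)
Your reduction to Theorem~\ref{theo:main} with $V_0=\{v_0\}$ is the right starting point, but the core of your argument --- that the \emph{first} pair of bracketing intervals must separate, with $\lambda_2(\Delta^{\W^-}_{\beta^-})-\lambda_1(\Delta^{\W^+}_{\beta^+})\ge\delta$ --- is false, so the ``weighted Cauchy--Schwarz transfer'' you flag as the main obstacle cannot be carried out. Concretely, let $\G$ be the star with center $v_0$ and leaves $v_1,v_2,v_3$, take combinatorial weights, $\tb=0$, and a single connecting arc $e_1$ joining $v_0$ to $v_1$ (realized by a $\Z$-covering whose layers are stars joined by one arc per layer). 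Then $[B(\subG,\Gt)]=\{e_1\}$, $\rho(v_0)=4$, $\lambda_1(\Delta^{\W^-}_{\beta^-})=0$, so $\delta=4-1-1-0=2>0$; but $\sigma(\Delta^{\W^-}_{\beta^-})=\{0,1,1,4\}$ and $\sigma(\Delta^{\W^+}_{\beta^+})=\{1,1,2\}$, whence $\lambda_2(\Delta^{\W^-}_{\beta^-})-\lambda_1(\Delta^{\W^+}_{\beta^+})=0<\delta$. The gap the theorem produces here is the third one, $(\lambda_3^+,\lambda_4^-)=(2,4)$, not the first. A sign error feeds your interlacing step: the compression of $\Delta^{\W^-}_{\beta^-}$ onto $\ell_2(V\setminus\{v_0\},m)$ equals $\Delta^{\W^+}_{\beta^+}$ \emph{minus} the nonnegative diagonal $v\mapsto m(E(v,v_0)\cap E_0)/m(v)$ (virtualizing the arcs of $E_0$ lowers the diagonal to $\rho^-(v)\le\rho(v)$, while $\Delta^{\W^+}_{\beta^+}$ keeps the full $\rho(v)$), so Cauchy interlacing gives a bound in the wrong direction and does not yield $\lambda_2^-\ge\lambda_1^+$.

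The idea you are missing is that $\delta$ controls the \emph{sum} of the signed lengths of all the candidate gaps, not any single one. Since $n^-=n$ and $n^+=n-1$, the complement of $J=\bigcup_k J_k$ in $[0,2\rho_\infty]$ has Lebesgue measure at least
\begin{equation*}
\sum_{k=1}^{n-1}\bigl(\lambda_{k+1}(\Delta^{\W^-}_{\beta^-})-\lambda_k(\Delta^{\W^+}_{\beta^+})\bigr)
=\Tr\bigl(\Delta^{\W^-}_{\beta^-}\bigr)-\Tr\bigl(\Delta^{\W^+}_{\beta^+}\bigr)-\lambda_1\bigl(\Delta^{\W^-}_{\beta^-}\bigr),
\end{equation*}
because each summand is bounded above by $\max\{0,\lambda_{k+1}^--\lambda_k^+\}$, the length of a component of $[0,2\rho_\infty]\setminus J$. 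Both traces are sums of relative weights, $\sum_{v}\rho^-(v)$ and $\sum_{v\ne v_0}\rho(v)$ respectively, and only the loop-free arcs of $E_0$ contribute to their difference, which works out to $\rho(v_0)-m(E_0)/m(v_0)-\sum_{e\in E_0}m_e/m((v_0)_e)$. Hence the lower bound is exactly $\delta$, and $\delta>0$ forces at least one (unspecified) pair of consecutive bracketing intervals to leave an open gap. No eigenfunction or test-function analysis is needed.
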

\begin{proof}
Consider the following arc and vertex virtualized weighted graphs: 
\begin{equation*}
\W^- := \W - [B(\subG,\Gt)]
\qquadtext{and}
\W^+:= \W - \{v_0\}\;.
\end{equation*}
Then by Theorem~\ref{theo:main}, we obtain 
\begin{equation*}
\sigma(\Delta^{\Wt}_{\tb}) 
\subset  \bigcup_{k=1}^{|V(\G)|}
\underbrace{[\lambda_k(\Delta^{\W^-}_{\beta^-}),\lambda_k(\Delta^{\W^+}_{\beta^+})]}_{=:J_k}=J\subset[0,2\rho_\infty].
\end{equation*}
To prove that $\sigma(\Delta^{\Wt}_{\tb}) \neq [0,2\rho_\infty]$ it is enough to show that 
the measure of $[0,2\rho_\infty]\setminus J$ is positive and
it can be estimate from below by:
\begin{align}
\nonumber
\sum_{k=1}^{n-1}
\bigl(\lambda_{k+1}\bigl(\Delta^{\EM}_{\beta^-}\bigr)
- \lambda_k\bigl(\Delta^{\VM}_{\beta^+}\bigr)\bigr)
=& \sum_{k=2}^{n}\lambda_{k}\bigl(\Delta^{\EM}_{\beta^-}\bigr)
-\sum_{k=1}^{n-1}\lambda_k\bigl(\Delta^{\VM}_{\beta^+}\bigr)\\
\label{eq:step0}
&= \Tr\bigl(\Delta^{\EM}_{\beta^-} \bigr)-   \Tr\bigl(\Delta^{\VM}_{\beta^+}\bigr)-	\lambda_{1}\bigl(\Delta^{\EM}_{\beta^-}\bigr).          
\end{align} 
Therefore it is enough to calculate	$\Tr(\Delta^{\VG}_{\beta^+})$ and $\Tr(\Delta^{\EG}_{\beta^-})$ 
(see \cite[Proposition~3.3]{fabila-lledo-post:18}).

\myparagraph{Step 1: Trace of $\Delta^{\EG}_{\beta^-}$.}  We define
$\EM=(\EG,\Em)$ where $\G^-=\G-[B(\subG,\Gt)]$. Recall that
$V(\G^-)=V(\G)$, $E(\G^-)=E(\G)\setminus [B(\subG,\Gt)]$; the
weights on $V(\G^-)$ and $E(\G^-)$ coincide with the
corresponding weights on $\W$.  The relative weights of $\EM$ are
\begin{equation*}
\rho^-(v)=
	\begin{cases}
	\rho^{\W}(v)-\dfrac{
		m \left( [B(\subG,\Gt)]\right) }{m(v)}, &\text{if $v=v_0$,}\\[2ex]
	\rho^{\W}(v)-\dfrac{m\left(  [B(\subG,\Gt)]\cap E_v\right) }{m(v)}, 
	&\text{if $v\in B_{v_0}$,}\\[2ex]
	\rho^{\W}(v), &  \text{otherwise,}
	\end{cases}
\end{equation*}
where 
\begin{equation*}
	B_{v_0}
	= \{v \in V(\G) \mid 
	v=(v_0)_e \text{ for some } e\in [B(\subG,\Gt)] \text{ with }
	v\neq v_0 \}.
\end{equation*}
The trace of $\Delta^{\EG}_{\beta^-}$ is now
	\begin{align}
	\nonumber
	\Tr\bigl(\Delta^{\EG}_{\beta^-}\bigr)&=\sum_{k=1}^n \lambda_k\bigl(\Delta^{\EG}_{\beta^-}\bigr)
	=\sum_{v\in V(\G)} \rho^-(v)\\
	\label{eq:step1}
	&= \sum_{v\in V(\G)}\rho^{\W}(v)-\dfrac{
		m \left( [B(\subG,\Gt)]\right) }{m(v)}-
	\sum\limits_{v\in B_{v_0}}\dfrac{m\left( [B(\subG,\Gt)]\cap E_v\right) }{m(v)}.
	\end{align}
	
\myparagraph{Step 2: Trace of $\Delta^{\VM}$.} Let $\W^+=\left(\VG,m^+ \right) $, then the trace of
$\Delta^{\VM}_{\beta^+}$ is given by
	\begin{equation}
	\label{eq:step2}
	\Tr\bigl(\Delta^{\VM}_{\beta^+})
	=\sum_{k=1}^{n-1} \lambda_k\bigl(\Delta^{\W}_{\beta}\bigr)
	=\sum_{\substack{v\in V(\G) \\v\neq v_0}} \rho^{\W}(v).
	\end{equation}
Combining Eqs.~\eqref{eq:step0}, \eqref{eq:step1} and \eqref{eq:step2} we obtain
\begin{align*}
 \Tr\bigl(\Delta^{\EM}_{\beta^-} \bigr)-   \Tr\bigl(\Delta^{\VM}_{\beta^+}\bigr)-	\lambda_{1}\bigl(\Delta^{\EM}_{\beta^-}\bigr)
& =\rho^{\W}(v_0)
-\frac{ m([B(\subG,\Gt)])}{m(v_0)}
-\sum_{v\in B_{v_0}} \frac {m([B(\subG,\Gt)]\cap E_v)}{m(v)}-	\lambda_{1}\bigl(\Delta^{\EM}_{\beta^-}\bigr)\\
& =\rho^{\W}(v_0)
-\frac{ m \left( [B(\subG,\Gt)]\right)}{m(v_0)}
- \sum_{e\in [B(\subG,\Gt)]}\frac{m_e}{m((v_0)_e)} -	\lambda_{1}\bigl(\Delta^{\EM}_{\beta^-}\bigr)=\delta   
\end{align*}
as defined in Eq.~\eqref{eq:weight.cond}.  This shows that if $\delta>0$, then the spectrum of the 
DML is not the full interval.
	
\end{proof}

\begin{remark}
	\label{rem:gen.weights}
	\indent
	\begin{enumerate}
  \item If the graph has the \emph{standard weights}, the condition
becomes:
\begin{equation}
\delta= 1 - \sum_{e\in [B(\subG,\Gt)]}\dfrac{1}{\deg((v_0)_e)}-
\dfrac{|[B(\subG,\Gt)]|}{\deg(v_0)}-\lambda_1(\Delta_{\beta^-}^{W^-})>0\;,
\end{equation}
where $|[B(\subG,\Gt)]|$ denote the cardinality of the set $[B(\subG,\Gt)]$.

  \item If we have the \emph{combinatorial weights}, the
condition becomes simply:
\begin{equation}\label{eq:conditioncom}
\delta= \deg(v_0) - 2\;|[B(\subG,\Gt)]|-\lambda_1(\Delta_{\beta^-}^{W^-})>0\;.
\end{equation}  

 \end{enumerate}
\end{remark}
%
%

%
%
\section{Examples.}
\label{sec:examples}
In this final section, we show some examples of graphs with standard weights that are used as models of important
chemical compounds, as the polyacetylene and the graphene nanoribbons. We use the bracketing
technique developed before to localize the spectrum and the gaps of these infinite covering graphs under the
action of a periodic magnetic potential $\tb$. In particular, we will show in these examples the dependence of the spectral gaps on $\tb$.

Given $\Wt=(\Gt,\m)$ a periodic weighted graph, we consider for simplicity in this section only periodic magnetic potentials $\tb$ with the property
that the flux through all cycle on $\Gt$ is constant and equal to $s$ for some $s\in[0,2\pi)$. Since two magnetic potentials are cohomologous iff they induce the same flux through all the cycles on the graph, all periodic magnetic potentials are constant in this sense and are determinate by the value $s$. 
That is to say even if $\tb$ is a function on the arcs, we can identify it
with one value in $\Torus$. We call this choice a constant magnetic field. A similar analysis can be done for non-constant magnetic potentials.


\subsection{Polyacetylene with magnetic field}
\label{subsec:poly}
%
For the first illustration of the existence of spectral gaps for covering graphs
with periodic magnetic potential, we study the graph modeling polyacetylene, an organic polymer that consists of a chain of carbon atoms (white circles) with alternating single 
and double bonds between them, each with one hydrogen atoms (black vertex).
We denote this MW-graph as $\Wt=(\Gt,\m)$, where $\Gt$ is in Figure~\ref{subfig:PA} 
and $\m$ are the standard weights. 
The polyacetylene belongs to the family of polymers, a chemical compound in long repeated
chains that can be naturally modeled by covering graphs. The polymers have important electrical properties (see, e.g., \cite{chien2012,shirakawa:01} and references therein).
In particular, the polyacetylene is a simple polymer with good electric conductance (cf., \cite{ekn:83}).
In \cite{fabila-lledo-post:18} we study the spectrum of the Laplacian in the infinite polyacetylene graph without any magnetic field.
Applying the results of the Section~\ref{sec:coverings}, we can now study the spectrum of the DML in the polyacetylene graph under the action
of a periodic magnetic potential, in particular, the size and localization of the spectral gaps.
For the polyacetylene we will prove the next facts:
\begin{itemize}
	\item \textit{Fact 1.} Let $\m$ be the standard weights and $\tb$ a constant periodic magnetic potential. We show how to apply the
	bracketing technique to localize the spectrum for a specific value of the magnetic potential (equal to $\pi/2$) and then
	we show how the bracketing intervals change as a function of $\tb$. We will show the existence of spectral gaps.
	\item \textit{Fact 2.} Let $\m$ be the combinatorial weights and $\tb$ a periodic magnetic potential (not necessarily constant).
	Using the condition on $\delta$ in Eq.~\ref{eq:conditioncom} we show the existence of spectral gaps.
	\item \textit{Fact 3.} Let $\m$ be the standard weights, we show the existence of periodic magnetic spectral gaps, i.e.,
	a spectral gap which is stable under any perturbation by the constant periodic magnetic field. 
\end{itemize}

\textit{Fact 1.} We define a periodic magnetic potential $\tb$ acting as in Figure~\ref{subfig:PA}, i.e., the potential acts only on the cycles 
defined by the double bonds.
Observe that the action of any constant magnetic field on the polymer can be represented by putting
a suitable value $s$ for the magnetic potential as in Figure~\ref{subfig:PA}. To be concrete, we put first the value $s=\pi/2$
and want to specify the band/gap structure of the spectrum $\sigma(\Delta_{\tb}^{\Wt})$.
The graph $\Gt$ in Figure~\ref{subfig:PA} is the infinite covering of the finite graph $\G$ in Figure~\ref{subfig:PA_0}.
This graph is bipartite and has Betti number 2. In this case, if $\W=(\G,m)$ with $m$ the standard weights, we have
by Proposition~\ref{prp:floq.mag} that
\begin{equation*}
\sigma(\Delta_{\tb}^{\Wt}) = \bigcup_{t\in [0,2\pi)}
\sigma(\Delta^{\W}_{\alpha^t}),
\end{equation*}
where $\alpha^t$ is a magnetic potential acting on the quotient $\W$ with $\alpha^t(e_1)=t$, $\alpha^t(e_2)=s$
and zero in all the other arcs. Define
$E_0:=\{e_1\}$ and $V_0:=\{v_1\}$, so that $V_0$ is in the neighborhood
of $E_0$ (see Definition \ref{def:admissible}). Then we construct $\W^+$ 
and $\W^-$ as before virtualizing arcs and vertices, i.e., $\G^-:=\G-E_0$ and $\G^+:=\G-V_0$ as in Figure~\ref{subfig:PA_1}.
The induced weights $m^-$ is defined as in Definition~\ref{deletearcs} and $m^+$ as in \ref{deletevertices}. Using the notation of the
Theorem~\ref{thm:technique} and Theorem~\ref{theo:main}
we get $\sigma(\Delta^{\Gt}_{\tb})\subset J\subset [0,2]$, where $J$ is the union of the localizing intervals $J_k$
(see Figure~\ref{subfig:spectra} for the case of $s=\pi/2$). 
Since $\G$ is bipartite
we have the symmetry of spectrum under the function $\kappa(\lambda)=2-\lambda$ (cf., \cite[Proposition~2.3]{lledo-post:08b}), hence
we also have $\sigma(\Delta^{\Gt}_{\tb})\subset \kappa(J)$. Therefore, the intersection gives a
finer localization of the spectrum, i.e., we obtain finally $\sigma(\Delta^{\Gt}_{\tb})\subset
J\cap \kappa(J)$. In this example our method works almost
perfectly, since we are able to determine almost precisely the spectrum:
\begin{equation*}
J\cap\kappa(J)\setminus{\{1\}}=\sigma\left(\Delta^{\Wt}_{\tb}\right).
\end{equation*}
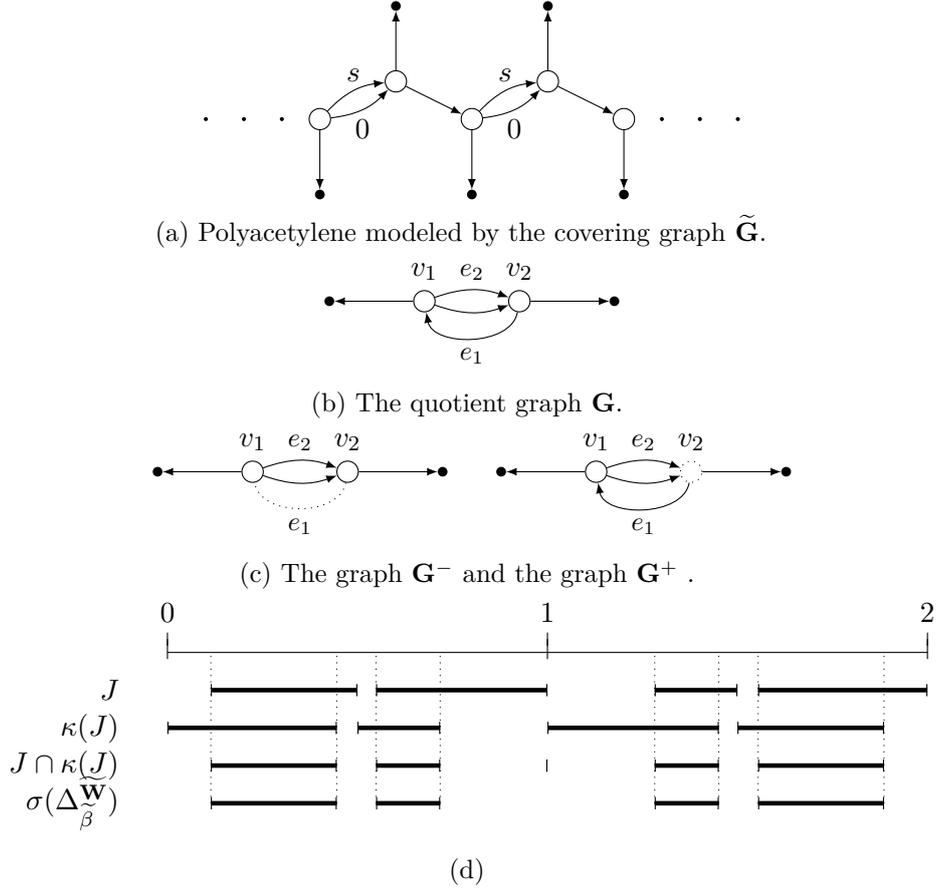
\begin{figure}[h]
	\centering 
	\subcaptionbox{Polyacetylene modeled by the covering graph $\Gt$.
		\label{subfig:PA}}%
	[.6\linewidth]{ 
		\begin{tikzpicture}[auto,
		vertex/.style={circle,draw=black!100,fill=black!100, thick,
			inner sep=0pt,minimum size=1mm},scale=2]
		\node (D) at (-1.25,0) [vertex,inner sep=.25pt,minimum
		size=.25pt,label=above:]{}; %
		\node (D) at (-1,0) [vertex,inner sep=.25pt,minimum
		size=.25pt,label=above:]{}; %
		\node (D) at (-.75,0) [vertex,inner sep=.25pt,minimum
		size=.25pt,label=above:]{}; %
		\node (O) at (-.5,0) [circle, minimum width=8pt, draw, inner
		sep=0pt] {}; %
		\node (A) at (0,.25) [circle, minimum width=8pt, draw, inner
		sep=0pt] {}; %
		\node (B) at (.5,0) [circle, minimum width=8pt, draw, inner
		sep=0pt]{}; %
		\node (C) at (1,.25) [circle, minimum width=8pt, draw, inner
		sep=0pt]{}; %
		\node (D) at (1.5,0) [circle, minimum width=8pt, draw, inner
		sep=0pt]{}; %
		\draw[-latex] (O) to[bend left=20] node[above] {$s$}(A); %
		\draw[-latex] (O) to[bend left=-20] node[below] {$0$} (A); %
		\draw[-latex] (A) to[] node[above] {} (B); %
		\draw[-latex] (B) to[bend left=20] node[above] {$s$} (C); %
		\draw[-latex] (B) to[bend left=-20] node[below] {$0$} (C); %
		\draw[-latex] (C) to[] node[above] {} (D); %
		\node (O1) at (-.5,-.5) [vertex,label=above:]{}; %
		\node (A1) at (0,.75) [vertex,label=above:]{}; %
		\node (B1) at (.5,-0.5) [vertex,label=above:]{}; %
		\node (C1) at (1,.75) [vertex,label=above:]{}; %
		\node (D1) at (1.5,-.5) [vertex,label=above:]{}; %
		\draw[-latex] (O) to[] node[above] {} (O1); %
		\draw[-latex] (A) to[] node[above] {} (A1); %
		\draw[-latex] (B) to[] node[above] {} (B1); %
		\draw[-latex] (C) to[] node[above] {} (C1); %
		\draw[-latex] (D) to[] node[above] {} (D1); %
		\node (D) at (1.75,0) [vertex,inner sep=.25pt,minimum
		size=.25pt,label=above:]{}; %
		\node (D) at (2,0) [vertex,inner sep=.25pt,minimum
		size=.25pt,label=above:]{}; %
		\node (D) at (2.25,0) [vertex,inner sep=.25pt,minimum
		size=.25pt,label=above:]{}; %
		\end{tikzpicture}}
	
	\subcaptionbox{The quotient graph $\G$.\label{subfig:PA_0}}
	[.3\linewidth]{
		\begin{tikzpicture}[auto,
		vertex/.style={circle,draw=black!100,fill=black!100, thick,
			inner sep=0pt,minimum size=1mm},scale=2.5]
		\node (O) at (-.5,0) [circle, minimum width=8pt, draw, inner
		sep=0pt, label=above:$v_1$] {}; %
		\node (A) at (0,0) [circle, minimum width=8pt, draw, inner
		sep=0pt, label=above:$v_2$] {}; %
		\draw[-latex] (O) to[bend left=20] node[above] {\small$e_2$} (A); %
		\draw[-latex] (O) to[bend left=-20] node[below] {\small{}} (A); %
		\draw[-latex] (A) to[bend left=80] node[below] {\small$e_1$} (O); %
		\node (O1) at (-1,0) [vertex,label=above:]{}; %
		\node (A1) at (.5,0) [vertex,label=above:]{}; %
		\draw[-latex] (O) to[bend left=0] node[above] {} (O1); %
		\draw[-latex] (A) to[bend left=-0] node[above] {} (A1); %
		\end{tikzpicture}}\\

	\subcaptionbox{The graph $\G^-$ and the graph $\G^+$ .\label{subfig:PA_1}}
	[.9\linewidth]{
		\begin{tikzpicture}[auto,
		vertex/.style={circle,draw=black!100,fill=black!100, thick,
			inner sep=0pt,minimum size=1mm},scale=2.5]
		\node (O) at (-.5,0) [circle, minimum width=8pt, draw, inner
		sep=0pt, label=above:$v_1$] {}; %
		\node (A) at (0,0) [circle, minimum width=8pt, draw, inner
		sep=0pt, label=above:$v_2$] {}; %
		\draw[-latex] (O) to[bend left=20] node[above] {\small$e_2$} (A); %
		\draw[-latex] (O) to[bend left=-20] node[below] {\small{}} (A); %
		\draw[dotted] (A) to[bend left=80] node[below] {\small$e_1$} (O); %
		\node (O1) at (-1,0) [vertex,label=above:]{}; %
		\node (A1) at (.5,0) [vertex,label=above:]{}; %
		\draw[-latex] (O) to[bend left=0] node[above] {} (O1); %
		\draw[-latex] (A) to[bend left=-0] node[above] {} (A1); %
		\end{tikzpicture}	\quad \begin{tikzpicture}[auto,
		vertex/.style={circle,draw=black!100,fill=black!100, thick,
			inner sep=0pt,minimum size=1mm},scale=2.5]
		\node (O) at (-.5,0) [circle, minimum width=8pt, draw, inner
		sep=0pt, label=above:$v_1$] {}; %
		\node (A) at (0,0) [circle, dotted,minimum width=8pt, draw, inner
		sep=0pt, label=above:$v_2$] {}; %
		\draw[-latex] (O) to[bend left=20] node[above] {\small$e_2$} (A); %
		\draw[-latex] (O) to[bend left=-20] node[below] {\small{}} (A); %
		\draw[-latex] (A) to[bend left=80] node[below] {\small$e_1$} (O); %
		\node (O1) at (-1,0) [vertex,label=above:]{}; %
		\node (A1) at (.5,0) [vertex,label=above:]{}; %
		\draw[-latex] (O) to[bend left=0] node[above] {} (O1); %
		\draw[-latex] (A) to[bend left=-0] node[above] {} (A1); %
		\end{tikzpicture}}\\
	\centering
	\subcaptionbox{\label{subfig:spectra}}
	[\linewidth]{	
		\begin{tikzpicture}[scale=5]
		\draw[-] (0,.5) -- (2,0.5) ; 
		\foreach \x in {0,1,2} 
		\draw[shift={(\x,0.5)},color=black] (0pt,1pt) -- (0pt,-.5pt); %
		\foreach \x in {0,1,2} 
		\draw[shift={(\x,0.5)},color=black] (0pt,0.5pt) -- (0pt,1.5pt)
		node[above] {$\x$}; %
		
		\draw[] (-.1,.4)node[left] {$J$}; %
		\draw[|-|] (0.114212,.4) -- (0.5,.4); %
		\draw[line width=.6mm] (0.114212,.4) -- (0.5,.4); %
		\draw[|-|] (0.549103,.4) -- (1,.4); %
		\draw[line width=.6mm] (0.549103,.4) -- (1,.4); %
		\draw[|-|] (1.28223,.4) -- (1.5,.4); %
		\draw[line width=.6mm] (1.28223,.4) -- (1.5,.4); %
		\draw[|-|] (1.55445,.4) -- (2,.4); %
		\draw[line width=.6mm] (1.55445,.4) -- (2,.4); %
		
		\draw[] (-.1,.3)node[left] {$\kappa(J)$}; %
		\draw[|-|] (0,.3) -- (0.44555,.3); %
		\draw[line width=.6mm] (0,.3) -- (0.44555,.3); %
		\draw[line width=.6mm] (0.5,.3) -- (0.717765,.3); %
		\draw[|-|] (0.5,.3) -- (0.717765,.3); %
		\draw[line width=.6mm] (1,.3) -- (1.4509,.3); %
		\draw[|-|] (1,.3) -- (1.4509,.3); %
		\draw[line width=.6mm] (1.5,.3) -- (1.88579,.3); %
		\draw[|-|] (1.5,.3) -- (1.88579,.3); %
		
		\draw[] (-.1,.2)node[left] {$J\cap\kappa(J)$}; %
		\draw[|-|] (0.114212,.2) -- (0.44555,.2); %
		\draw[line width=.6mm] (0.114212,.2) -- (0.44555,.2); %
		\draw[|-|] (0.549103,.2) -- (0.717765,.2); %
		\draw[line width=.6mm] (0.549103,.2) -- (0.717765,.2); %
		\draw[|-|] (.9999,.2) -- (1.00001,.2); %
		\draw[|-|] (2-0.114212,.2) -- (2-0.44555,.2); %
		\draw[line width=.6mm] (2-0.114212,.2) -- (2-0.44555,.2); %
		\draw[|-|] (2-0.549103,.2) -- (2-0.717765,.2); %
		\draw[line width=.6mm] (2-0.549103,.2) -- (2-0.717765,.2); %

		\draw[] (-.1,.1)node[left] {$\sigma(\Delta^{\Wt}_{\tb})$}; %
		\draw[|-|] (0.114212,.1) -- (0.44555,.1); %
		\draw[line width=.6mm] (0.114212,.1) -- (0.44555,.1); %
		\draw[|-|] (0.549103,.1) -- (0.717765,.1); %
		\draw[line width=.6mm] (0.549103,.1) -- (0.717765,.1); %
		
		\draw[|-|] (2-0.114212,.1) -- (2-0.44555,.1); %
		\draw[line width=.6mm] (2-0.114212,.1) -- (2-0.44555,.1); %
		\draw[|-|] (2-0.549103,.1) -- (2-0.717765,.1); %
		\draw[line width=.6mm] (2-0.549103,.1) -- (2-0.717765,.1); %

		\draw[dotted] (0.44555,.1) -- (0.44555,.5); %
		\draw[dotted] (0.114212,.1) -- (0.114212,.5); %
		\draw[dotted] (0.717765,.1) -- (0.717765,.5); %
		\draw[dotted] (0.549103,.1) -- (0.549103,.5); %
		\draw[dotted] (2-0.44555,.1) -- (2-0.44555,.5); %
		\draw[dotted] (2-0.114212,.1) -- (2-0.114212,.5); %
		\draw[dotted] (2-0.717765,.1) -- (2-0.717765,.5); %
		\draw[dotted] (2-0.549103,.1) -- (2-0.549103,.5); %
		\end{tikzpicture}}                  	      	     	           	
	\caption{Spectral gaps of the \emph{polyacetylene} graph for a constant magnetic potential
		$\beta=s$. Here,
		$J$ is the spectral localization for the pair $\G - \{e_1\}$ and $\G
		-\{v_1\}$. Bipartiteness gives a finer localization $J \cap \kappa(J)$.
		In this case we obtain the spectrum almost exactly, except for the spectral value
		$1$.}
	\label{fig:acetylene}
\end{figure}    

In conclusion, given a covering graph $\Wt$ with a periodic magnetic potential $\tb$ (see Figure~\ref{fig:acetylene} for $s=\pi/2$),
we were able to almost determine $\sigma(\Delta^{\Wt}_{\tb})$ just by specifying the localization of the spectrum given by $J\cap\kappa(J)$ (and without
computing explicitly the spectrum). 
Obviously, $J$ depends on $\tb$ and therefore of the value of $s$. Therefore for each value of $\tb$ we can construct a bracketing $J(\tb)$ of intervals for the spectrum of $\Delta^{\Wt}_{\tb}$ and, since in this case we have the reflection symmetry specified by $\kappa$ and an additional interlacing property of $\W^-$ due to 
Cauchy's theorem (to be shown in \cite{fabila-lledo-post:pre19}) we are able to give a much finer localization of the spectrum. 
In Figure~\ref{fig:Prop} we plot the spectrum $\sigma(\Delta^{\Gt}_{\tb})$ of the DML as a function of the periodic magnetic potential
$\tb$ varying within the interval $[0,2\pi]$. Here one can appreciate how the size of the gaps and their localization within the interval
$[0,2]$ changes as a function of the external magnetic field.

 \begin{figure}
 \includegraphics[width=0.7\linewidth]{./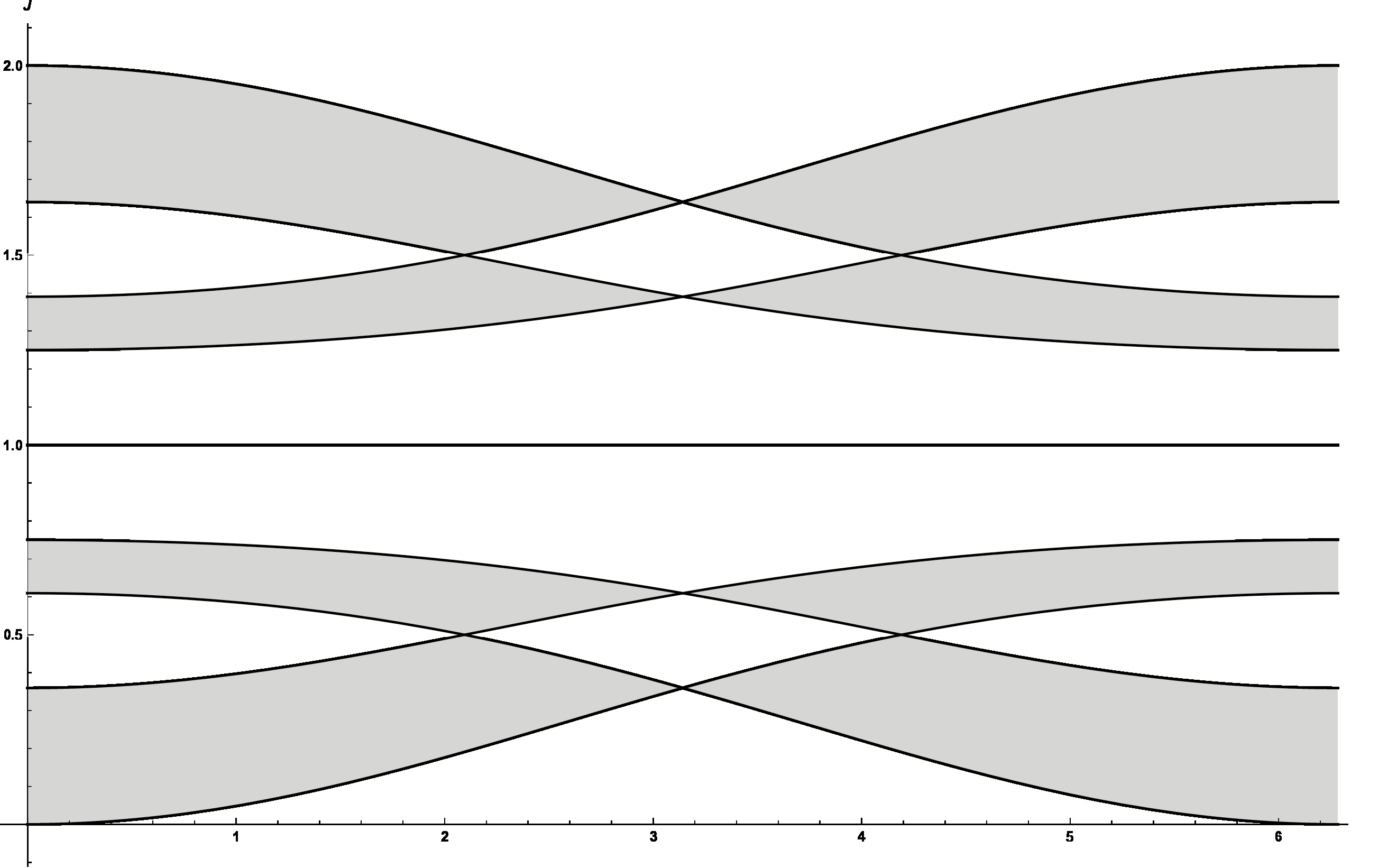}
 \caption{The horizontal axis represents the values of the magnetic potential $\tb\in[0,2\pi)$ acting on the polyacetylene polymer with standard weights. 
 For any fixed $\tb$ we obtain the intervals $J$ given by the bracketing technique as we did in the case $\tb=\pi/2$ in the Figure~\ref{fig:acetylene} 
 (and also using the symmetry given by the bipartiteness). In the vertical axis we represent the spectral bands and gaps
 for each constant value $\tb$.}\label{fig:Prop}
 \end{figure}
 
\textit{Fact 2.} We have proved using the bracketing technique that the polyacetylene with standard weights has spectral gaps for any constant periodic magnetic potential acting on it. Now, if we consider the polyacetylene with combinatorial weight, we will prove more easily the existence of spectral gaps for
all periodic magnetic potentials (not necessarily constant). Formally, let $\W=(\Gt,\m)$ be the MW-graph where $\Gt$ is the polyacetylene (Figure~\ref{subfig:PA}), $\m$ are the combinatorial weights and $\tb$ any periodic magnetic potential. Let $\G^-$ as in \textit{Fact 1}, but now
$m^-$ are also the combinatorial weights. First, we observe that $\lambda_1(\Delta_{\beta^-}^{W^-})<2$,  then we calculate $\delta$ from condition in Eq.~\ref{eq:conditioncom},
i.e., \begin{equation*}
\delta= \deg(v_1) - 2\;|[B(\subG,\Gt)]|-\lambda_1(\Delta_{\beta^-}^{W^-})>4-2-2=0,
\end{equation*}  
then by Theorem~\ref{teo:delta} we have spectral gaps. Observe we do this without compute explicitly any eigenvalue.

\textit{Fact 3.} Our method of virtualizing suitable arcs and vertices allows to proceed also alternatively. Define now
$E_1:=\{e_1,e_2\}$ and $V_1:=\{v_1\}$ so that $V_1$ is a neighborhood
of $E_1$ (see Definition \ref{def:admissible}). We construct as usual the MW-graphs $\W_1^+$ 
and $\W_1^-$ setting $\G_1^+=\G-E_1$ and $\G_1^-=\G-V_1$ as in Figure~\ref{fig:plotspectral}
and inducing the weights as in Definition~\ref{deletearcs} and~\ref{deletevertices} (observe that in this
case $\W^+_1=\W^+$). Using the notation of the
Theorem~\ref{thm:technique} and Proposition~\ref{prp:floq.mag} we observe now that the spectral localization intervals 
do {\em not} depend on the periodic magnetic potential. In fact, using the same idea that before we obtain

\begin{equation*}
\sigma(\Delta^{\Wt}_{\tb})\subset [0,3/4]\cup[5/4,2] \quadtext{for all periodic constant magnetic potential} \tb,
\end{equation*}
in particular, $(3/4,5/4)$ is a spectral gap which is stable under any perturbation by the magnetic field. 
Finally, we note that if the magnetic potential has a constant value equal to
$\pi$ then the spectrum degenerates to four eigenvalues with infinity multiplicity, i.e., the gaps
consist of the whole interval $[0,2]$ except for the four eigenvalues. In this case, the polyacetylene
becomes essentially an insulator under the influence of this particular value of the magnetic field.
\begin{figure}
	{\begin{tikzpicture}[auto,
		vertex/.style={circle,draw=black!100,fill=black!100, thick,
			inner sep=0pt,minimum size=1mm},scale=2.5]
		\node (O) at (-.5,0) [circle, minimum width=8pt, draw, inner
		sep=0pt, label=above:$v_1$] {}; %
		\node (A) at (0,0) [circle, minimum width=8pt, draw, inner
		sep=0pt, label=above:$v_2$] {}; %
		\draw[dotted] (O) to[bend left=20] node[above] {\small$e_2$} (A); %
		\draw[-latex] (O) to[bend left=-20] node[below] {\small{}} (A); %
		\draw[dotted] (A) to[bend left=80] node[below] {\small$e_1$} (O); %
		\node (O1) at (-1,0) [vertex,label=above:]{}; %
		\node (A1) at (.5,0) [vertex,label=above:]{}; %
		\draw[-latex] (O) to[bend left=0] node[above] {} (O1); %
		\draw[-latex] (A) to[bend left=-0] node[above] {} (A1); %
		\end{tikzpicture}\quad \begin{tikzpicture}[auto,
		vertex/.style={circle,draw=black!100,fill=black!100, thick,
			inner sep=0pt,minimum size=1mm},scale=2.5]
		\node (O) at (-.5,0) [circle, minimum width=8pt, draw, inner
		sep=0pt, label=above:$v_1$] {}; %
		\node (A) at (0,0) [circle, dotted,minimum width=8pt, draw, inner
		sep=0pt, label=above:$v_2$] {}; %
		\draw[-latex] (O) to[bend left=20] node[above] {\small$e_2$} (A); %
		\draw[-latex] (O) to[bend left=-20] node[below] {\small{}} (A); %
		\draw[-latex] (A) to[bend left=80] node[below] {\small$e_1$} (O); %
		\node (O1) at (-1,0) [vertex,label=above:]{}; %
		\node (A1) at (.5,0) [vertex,label=above:]{}; %
		\draw[-latex] (O) to[bend left=0] node[above] {} (O1); %
		\draw[-latex] (A) to[bend left=-0] node[above] {} (A1); %
		\end{tikzpicture}}
	\caption{Using this graph $\G_1^-$ and  $\G_1^+$, we can find spectral gaps in common for all periodic magnetic potential $\tb$ acting on the polyethylene, represented by the covering graph $\G$.}\label{fig:plotspectral}
\end{figure}
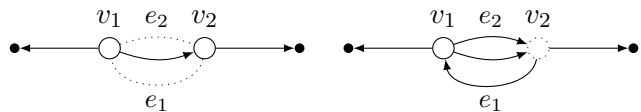

\subsection{Graphene nanoribbons}
\label{subsec:nanoribbon}
In this subsection, we will apply our method to study the example of the graphene nanoribbons (GNRs), also known as nano-graphene ribbons or nano-graphite ribbons. These are strips of graphene with semiconductive properties which are very promising
as nano-electronic devices (see, e.g., \cite{santos:09}). One of the most interesting fields of research of the nanoribbons
is the energy gaps as a function of their widths. We refer, for example, to \cite{son:06} and \cite{Han:07}. 
The GNRs repeat their geometry structure in two different ways and can be represented as $\Z$-covering graphs (see Figure~\ref{fig:ArmZigzag}).
 \begin{figure}
 	{\includegraphics[scale=.07]{./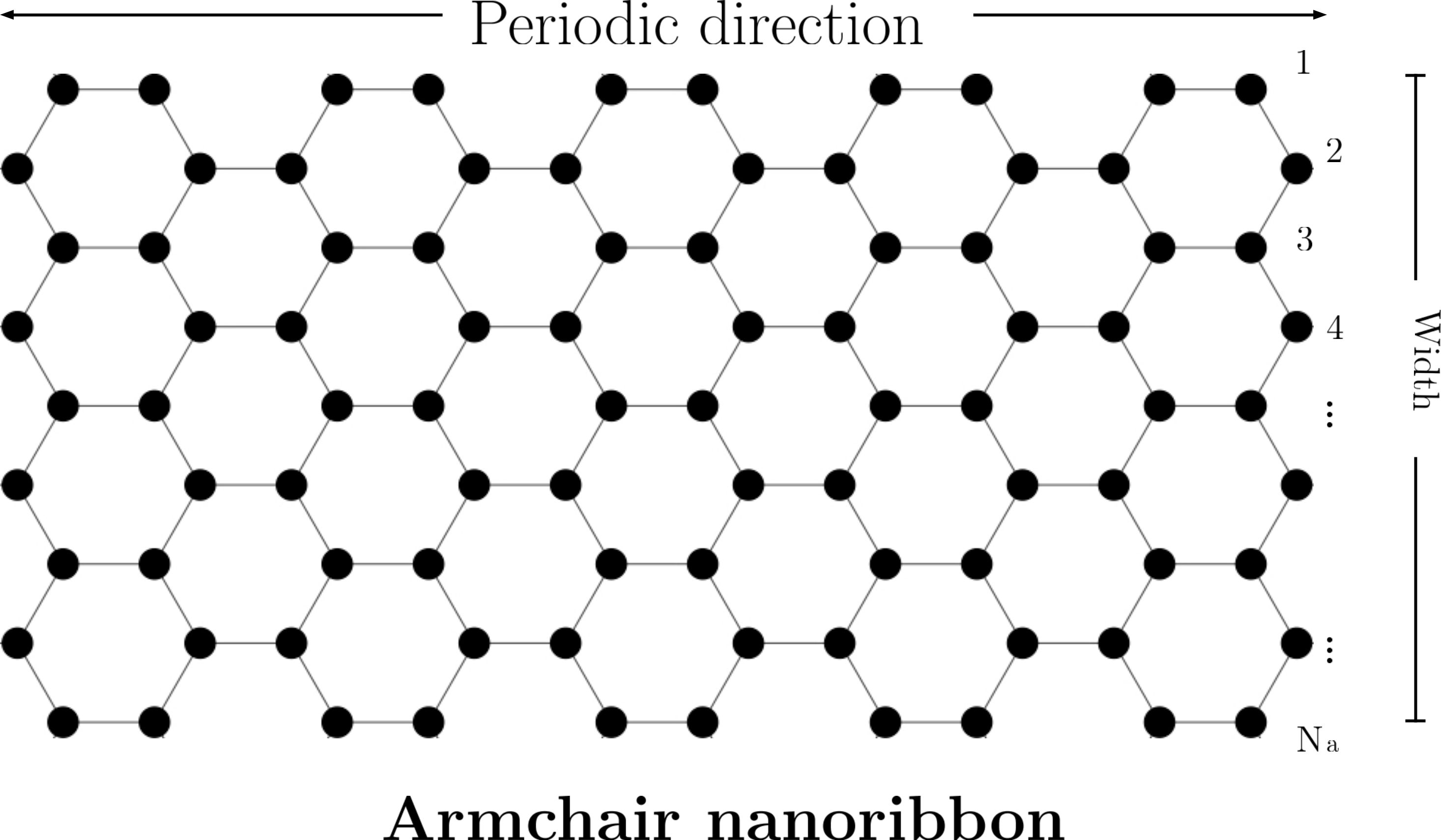}
 	\quad \includegraphics[scale=.07]{./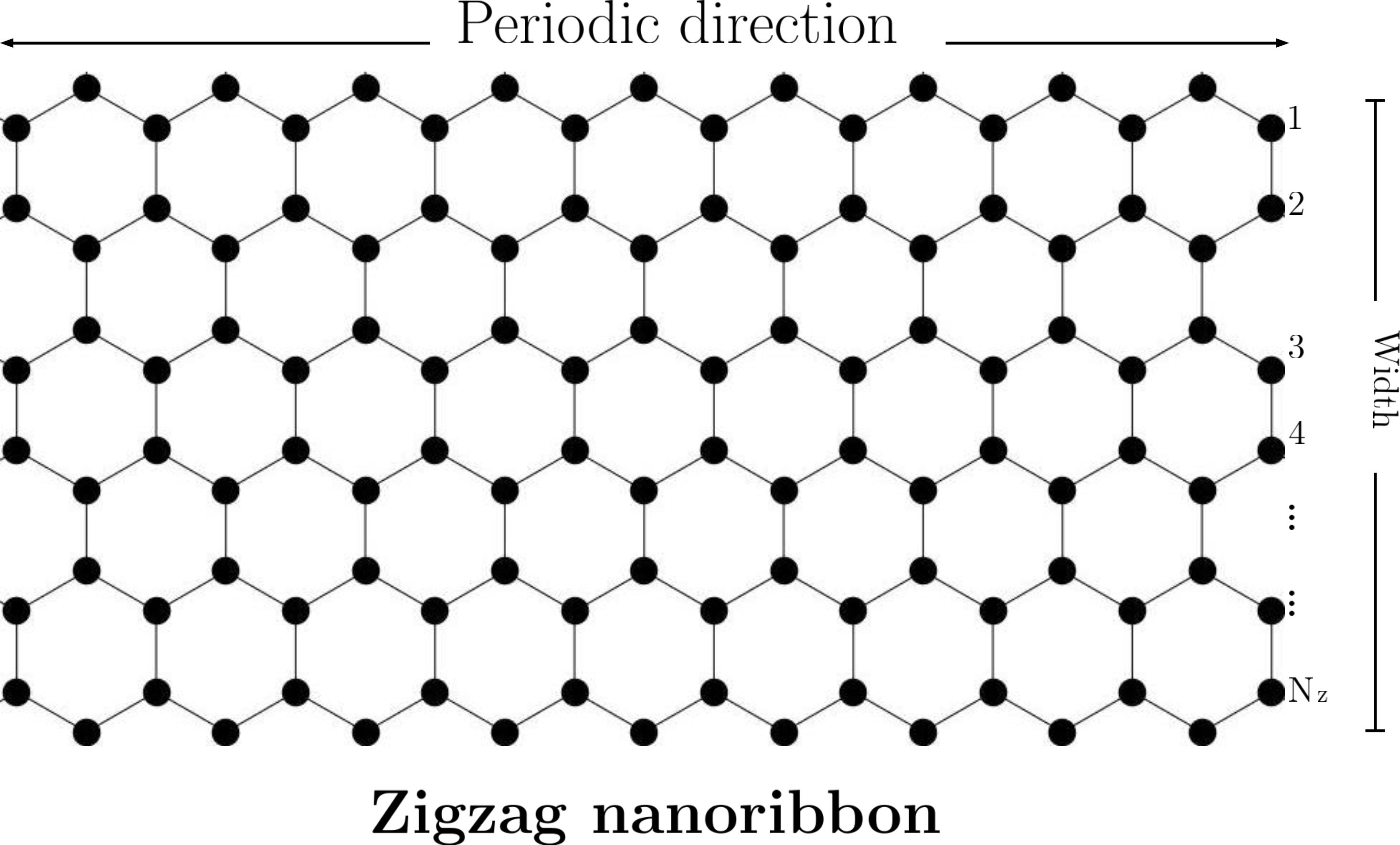}}
 	\caption{ Two structures of the graphene nanoribbons: armchair and zigzag. These structures
 	are covering graphs only in one direction. }\label{fig:ArmZigzag}
 \end{figure}
\begin{itemize}
 \item[(i)] The first variant is called armchair nanoribbon with width equal to $N_a$ and denoted as $N_a$-aGNR 
(see Figure~\ref{fig:ArmZigzag}). Consider for example 
the case of a $3$-aGNR which has similar structure as the poly-para-phenylene (PPP), one of the most important 
conductive polymers. Let $\W=(\Gt,\m)$ the MW-graph with standard weights where $\Gt$ the $\Z$-covering graph representing the $3$-aGNR and $\tb$ a constant (periodic)
magnetic potential, the idea will be use the bracketing technique to localize $\sigma(\Delta^{\Wt}_{\tb})$ and we
proceed as in the previous examples. Figure~\ref{subfig:coveringPPP} is the finite quotient graph $\G=\Gt/\Z$. 
Define in this case $E_1=\{e_1\}$ and $V_1=\{v_1\}$ so that $V_1$ is a neighborhood
of $E_1$ (see Definition \ref{def:admissible}). We construct $\W_1^+$ 
and $\W_1^-$ as before: $\G_1^+=\G-E_1$ and $\G_1^-=\G-V_1$ (cf., Figure~\ref{subfig:PPP-}). 
The weights are induced as in Definitions~\ref{deletearcs} and~\ref{deletevertices}. Using again 
the notation of the Theorem~\ref{thm:technique} and Proposition~\ref{prp:floq.mag} we obtain now a spectral localization
$J$ that depends on $\tb$. Finally, in Figure~\ref{subfig:spectraPPP} we plot the spectral bands and gaps specified by $J$ 
for the different values of the magnetic field within the interval $[0,2\pi]$. Observe that in this case
we don't have a spectral gaps common to all values of $\tb$ (as we had for the polyacetylene).

\begin{figure}[h]
	\centering

	\subcaptionbox{The quotient graph $\G$ of $3$-aGRN.\label{subfig:coveringPPP}}
	[.7\linewidth]{
		\begin{tikzpicture}[auto,
		vertex/.style={circle,draw=black!100,fill=black!100, thick,
			inner sep=0pt,minimum size=1mm},scale=.1]
		\node (1) at (5.0,-8.7) [vertex,label=above:]{}; %
		\node (2) at (-5.0,-8.7) [vertex,label=above:]{}; %
		\node (3) at (-10.0,0) [vertex,label=left:$v_1$]{}; %
		\node (4) at (-5.0,8.7) [vertex,label=above:]{}; %
		\node (5) at (5.0,8.7) [vertex,label=above:]{}; %
		\node (6) at (10.0,0) [vertex,label=above:]{}; %
		
		\draw[-latex] (1) to[bend left=0] node[above] {} (2); 	
		\draw[-latex] (2) to[bend left=0] node[above] {} (3); 
		\draw[-latex] (3) to[bend left=0] node[above] {} (4); 
		\draw[-latex] (4) to[bend left=0] node[above] {} (5); 
		\draw[-latex] (5) to[bend left=0] node[above] {} (6); 
		\draw[-latex] (6) to[bend left=0] node[above] {} (1); 	
		\draw[-latex] (3) to[bend left=0] node[above] {$e_1$} (6); 			
		\end{tikzpicture}}\\

	\subcaptionbox{The graph $\G^-$ and the graph $\G^+$ .\label{subfig:PPP-}}
	[.9\linewidth]{
			\begin{tikzpicture}[auto,
		vertex/.style={circle,draw=black!100,fill=black!100, thick,
			inner sep=0pt,minimum size=1mm},scale=.1]
		\node (1) at (5.0,-8.7) [vertex,label=above:]{}; %
		\node (2) at (-5.0,-8.7) [vertex,label=above:]{}; %
		\node (3) at (-10.0,0) [vertex,label=left:$v_1$]{}; %
		\node (4) at (-5.0,8.7) [vertex,label=above:]{}; %
		\node (5) at (5.0,8.7) [vertex,label=above:]{}; %
		\node (6) at (10.0,0) [vertex,label=above:]{}; %
		
		\draw[-latex] (1) to[bend left=0] node[above] {} (2); 	
		\draw[-latex] (2) to[bend left=0] node[above] {} (3); 
		\draw[-latex] (3) to[bend left=0] node[above] {} (4); 
		\draw[-latex] (4) to[bend left=0] node[above] {} (5); 
		\draw[-latex] (5) to[bend left=0] node[above] {} (6); 
		\draw[-latex] (6) to[bend left=0] node[above] {} (1); 	
		\draw[dotted] (3) to[bend left=0] node[above] {$e_1$} (6); 			
		\end{tikzpicture}	\quad 	\begin{tikzpicture}[auto,
		vertex/.style={circle,draw=black!100,fill=black!100, thick,
			inner sep=0pt,minimum size=1mm},scale=.1]
		\node (1) at (5.0,-8.7) [vertex,label=above:]{}; %
		\node (2) at (-5.0,-8.7) [vertex,label=above:]{}; %
		\node (3) at (-10.0,0) [circle, dotted,minimum width=8pt, draw, inner
		sep=0pt, label=left:$v_1$]{}; %
		\node (4) at (-5.0,8.7) [vertex,label=above:]{}; %
		\node (5) at (5.0,8.7) [vertex,label=above:]{}; %
		\node (6) at (10.0,0) [vertex,label=above:]{}; %
		
		\draw[-latex] (1) to[bend left=0] node[above] {} (2); 	
		\draw[-latex] (2) to[bend left=0] node[above] {} (3); 
		\draw[-latex] (3) to[bend left=0] node[above] {} (4); 
		\draw[-latex] (4) to[bend left=0] node[above] {} (5); 
		\draw[-latex] (5) to[bend left=0] node[above] {} (6); 
		\draw[-latex] (6) to[bend left=0] node[above] {} (1); 	
		\draw[-latex] (3) to[bend left=0] node[above] {$e_1$} (6); 			
		\end{tikzpicture}}\\
	\centering
	\subcaptionbox{Spectral bands and gaps as a function of the constant (periodic) magnetic potential $\tb$.\label{subfig:spectraPPP}}
	[\linewidth]{	
		\includegraphics[scale=.3]{./spectrum3agnr.jpg}}                  	      	     	           	
	\caption{Spectral gaps of \emph{3-aGNR} for a constant magnetic potential
		$\tb=s$. Here,
		$J$ is the spectral localization of the pair $\G - \{e_1\}$ and $\G
		-\{v_1\}$, bipartitness (and interlacing) gives again the bracketing $J$ as
		localization set.}  
	\label{fig:3aGNR}
\end{figure}    
  Similar analysis could be done for any $N_a$-aGNR under the action of any periodic magnetic potential, and the bracketing technique will give good estimates
  of the intervals where the spectrum lies.
  
  Also, observe that for the combinatorial weights, we can show the existence of spectral gaps using the condition of Eq.~\ref{eq:conditioncom}
  as in \textit{Fact 2} in the polyacetylene example. We have in this case,
   \begin{equation*}
  \delta= \deg(v_1) - 2\;|[B(\subG,\Gt)]|-\lambda_1(\Delta_{\beta^-}^{W^-})>3-2-1=0.
  \end{equation*}  
  
\item[(ii)] The second variant is the so-called zigzag nanoribbon with width equal to $N_z$ are denoted as $N_z$-zGNR (see Figure~\ref{fig:ArmZigzag}). Consider $\W=(\Gt,\m)$ the MW-graph with standard weights and $\Gt$ is the graph 
 given by the zigzag nanoribbons for a fixed $N_z$, and $\tb\sim 0$ acting on $\Gt$. In this case our spectral localization method does not specify spectral gaps 
 (i.e., the spectral bands overlap). 
 The reason is that for any width $N_z$ the spectrum of the zigzag nanoribbons satisfy $\sigma(\Delta^{\Wt}_{0})=[0,2]$, i.e., in this case there are no spectral gaps.
 This fact is confirmed also by our method.
 \end{itemize}
 
\providecommand{\bysame}{\leavevmode\hbox to3em{\hrulefill}\thinspace}
\nocite{*}

\end{document}